\tikzset{level 1/.style={level distance=1.5cm, sibling distance=3.5cm}}
\tikzset{level 2/.style={level distance=1.5cm, sibling distance=2cm}}
\newtheorem{lemma}{Lemma}[section]
\newcommand\bee{\begin{equation}}
\newcommand\eeq{\end{equation}}
\renewcommand\appendixautorefname{appendix}
\newcommand{\Sectionref}[1]{\mbox{\def\sectionautorefname{Section}\vphantom{y}\autoref{#1}}}
\newcommand{\Appendixref}[1]{\mbox{\def\appendixautorefname{Appendix}\vphantom{y}\autoref{#1}}}
\newcommand{\ZZ}{{\mathbb{Z}}}
\DeclarePairedDelimiter{\abs}{\lvert}{\rvert}
\DeclareMathOperator{\Tr}{Tr}
\newcommand{\sym}{\mathrm{S}}
\newcommand{\AdjG}{\chi_{\raisebox{-1.5pt}{\scriptsize{\rm Adj}\,{\it G}}}}
\newcommand{\yngantisym}{{\let\@nomath\@gobble\let\typeout\@gobble\tiny\yng(1,1)}}
\title{Marginal Deformations of  3d  ${\cal N}=4$ Linear Quiver Theories} 
\author{}
\date{}
\begin{document}

\maketitle
\begin{center}
Constantin Bachas$^a$, Ioannis Lavdas$^a$ and  Bruno Le Floch$^b$\\
\medskip\medskip\medskip

$^a$%
\textit{Laboratoire de Physique de l'\'Ecole Normale Sup\'erieure,
  \'ENS, Universit\'e PSL, CNRS,\\
  Sorbonne Universit\'e, Universit\'e Paris-Diderot, Sorbonne Paris Cit\'e, Paris, France}

\medskip

$^b$%
\textit{Institut Philippe Meyer, Département de Physique,\\ \'Ecole Normale Sup\'erieure, Université PSL, Paris, France}

\vskip 2cm

\begin{abstract}

We study  superconformal 
deformations of the $T_\rho^{\hat\rho}[SU(N)]$  theories of Gaiotto-Hanany-Witten,
paying special attention to  mixed-branch operators  with  both electrically- and magnetically-charged fields. 
We explain why all  marginal  ${\cal N}=2$ operators of an ${\cal N}=4$ CFT$_3$ 
can be extracted unambiguously  from 
  the  superconformal index.  Computing the index at the appropriate order we show 
   that  the mixed moduli in  $T_\rho^{\hat\rho}[SU(N)]$  theories
  are double-string operators   transforming   in
the  (Adjoint, Adjoint)  representation of the electric and magnetic flavour groups,  up to  some 
overcounting  
for   quivers with abelian gauge nodes.  
 We  comment  on  the   holographic interpretation of the  results,  arguing in particular that
  gauged supergravities can   capture the entire moduli space if,  in addition to the 
  (classical) parameters of the background solution,  one takes also into account  the  (quantization) moduli   of 
  boundary conditions.

\end{abstract}

\end{center}

\numberwithin{equation}{section}

\newpage
{\let\oldpdfendlink\pdfendlink
\renewcommand\pdfendlink{\oldpdfendlink\vphantom{Iy}} % make depth of links more uniform in table of contents
\tableofcontents
}
\newpage

\section{Introduction}

   Superconformal field theories  (SCFT)  often  have continuous deformations  preserving
   some  superconformal
   symmetry. The space of such deformations is  a Riemannian  manifold (the `superconformal manifold')
   which
   coincides with   the moduli space of  supersymmetric 
   Anti-de Sitter (AdS) vacua when the SCFT has  a holographic dual.  
   Mapping out such   moduli spaces is of    interest both for field theory and  for  the study
    of the string-theory landscape.

       In this paper  we will be interested in  superconformal manifolds  in the vicinity  of the    `good'    
 theories    $T_\rho^{\hat\rho}\left[SU(N)\right]$  whose existence was   conjectured  by  Gaiotto and Witten \cite{Gaiotto:2008ak}.  These are three-dimensional ${\cal N}=4$ SCFTs   arising  as
  infrared fixed points of   a certain class of  quiver gauge theories introduced by Hanany and Witten \cite{Hanany:1996ie}.
    Their   holographic  duals  are   four-dimensional Anti-de Sitter 
   (AdS$_4$) solutions   of  type-IIB string theory \cite{Assel:2011xz,Assel:2012cj,Cottrell:2016nsu,Lozano:2016wrs}. 
                     Our main motivation in  this  work  was  to  extract   features  of  these   moduli   spaces
         not readily  accessible  from  the  gravity side.   
          We build on  the analysis  of   ref.\,\cite{Bachas:2017wva} which 
          we   complete and amend  in significant  ways. 
\smallskip
 
Superconformal deformations  of a  $d$-dimensional theory $T_\star$  are  generated  by the  set of marginal
operators $\{\mathcal{O}_{i}\}$ that preserve some or  all  of its 
supersymmetries.\footnote{One  exception to this general rule 
  is  the  gauging of  a global symmetry with vanishing $\beta$ function  in  four dimensions.} 
 The existence of such operators is  constrained by the analysis  of  
  representations of the superconformal  algebra \cite{Cordova:2016xhm}.  
  In particular,   unitary SCFTs  have no 
   moduli  in $d=5$ or $6$  dimensions,  whereas  in  the case $d=3$ of  interest  here 
   moduli   preserve at most  ${\cal N}=  2$ supersymmetries. 
  Those  preserving only ${\cal N}=1$ belong to long (`D-term')  
  multiplets whose  dimension is not protected against quantum corrections. 
  The existence of such  ${\cal N}=1$  moduli (and of  non-supersymmetric ones)  is  fine-tuned
  and thus accidental.  For this reason we focus here on  the ${\cal N}=  2$ moduli. 
  
    \smallskip
  
   The general  local structure of ${\cal N}=  2$ superconformal manifolds in three dimensions 
(and  of   the  closely-related  case ${\cal N}= 1$ in  $d=4$)
has been described in   \cite{Leigh:1995ep,Kol:2002zt,Aharony:2002hx,Benvenuti:2005wi,Green:2010da}. 
These  manifolds are  K\"ahler quotients 
of the
space $\{\lambda^i\}$
of  marginal supersymmetry-preserving couplings modded out by the complexified global (flavor)
symmetry group $G_{\rm global}$,  
\bee\label{quotient}
{\cal M}_{\rm SC} \,\simeq\,  \{\lambda^i\}/G_{\rm global}^{\mathbb{C}} \,\simeq\,
 \{\lambda^i \vert D^a=0\}/G_{\rm global}  \ . 
\eeq
The meaning of  this   is as follows:  marginal scalar operators $\mathcal{O}_{i}$
 fail to be {\it exactly}  marginal if and only if they combine 
with  conserved-current multiplets of $G_{\rm global}$ to form  long (unprotected)  current multiplets. 
 Requesting    this   not to happen 
     imposes the  moment-map  conditions 
   \bee\label{quotient1}
D^a \,=\,  \lambda^i T^a_{i\bar j} \bar\lambda^{\bar j} +  O(\lambda^3) \,=\, 0\ , 
   \eeq 
where    $T^a$  are the  generators of $G_{\rm global}$ 
in the representation of the  couplings.  The second quotient by   $G_{\rm global}$  in \eqref{quotient} 
identifies   deformations that belong to  the same   orbit. 
The complex dimension of the moduli space    is therefore equal to the difference
\bee\label{dimMSCgeneral}
\dim_{\mathbb{C}} {\cal M}_{\rm SC} = \# \{   \mathcal{O}_{i}\}  - \dim G_{\rm global}\ . 
\eeq
In the dual  
 gauged supergravity  (when  one exists)  the fields dual to single-trace operators $\mathcal{O}_{i}$ are    
 ${\cal N}= 2$ hypermultiplets,  and
         $D^a=0$ are    D-term conditions \cite{deAlwis:2013jaa}.
\smallskip

  The global flavour symmetry of the $T_\rho^{\hat\rho}\left[SU(N)\right]$ theories, viewed 
   as  ${\cal N}=2$ SCFTs,   is a product
      \bee\label{Ggl}
      G_{\rm global} = G \times \hat G \times U(1)  \ , 
       \eeq 
      where  $G$ and $\hat G$ are the 
     flavour groups  of the electric and  magnetic   theories that are   related by mirror symmetry, 
      and $U(1) $ is the
       subgroup of the  $SO(4)_R$   symmetry  
      which commutes with the unbroken ${\cal N}=2$. 
      As for any 3d ${\cal N}=2$ theory, the local moduli space is the K\"ahler quotient~\eqref{quotient}.
      To determine this moduli space we must  thus   list   all  marginal   
      supersymmetric operators   
      and   the $G_{\rm global}$-representation(s)  in which they  transform.
      The ${\cal N}=4$ supersymmetry helps to identify these unambiguously.
    Many of these marginal deformations are standard superpotential deformations 
involving hypermultiplets  of  either the electric theory or its magnetic mirror.
Some marginal operators  involve,  however,   both kinds of hypermultiplets and do not
admit a local Lagrangian description. We refer to  such  deformations as  `mixed'. 
They are specific to three dimensions, and will be the focus of our paper. 

       Marginal  ${\cal N}=2$ 
       deformations  of  ${\cal N}=4$ theories belong to three kinds of superconformal
       multiplets~\cite{Bachas:2017wva}. The  Higgs-  and  Coulomb-branch  superpotentials  belong,   respectively,  
        to     $(2,0)$ and $(0,2)$ 
         representations of  
 $SO(3)_{H}\times SO(3)_{C}$,
where $SO(3)_{H}\times SO(3)_{C} \simeq  SO(4)_R$ is the   ${\cal N}=4$  
$R$-symmetry.\footnote{$SO(3)_{H}$ and $SO(3)_{C}$ act on the chiral rings of the pure Higgs 
and pure Coulomb branches of the   theory, whence their names. They are exchanged by mirror symmetry.}
The mixed marginal operators on the other hand transform in the $(J^H, J^C)=(1,1)$ representation. 
  In the holographic dual supergravity  the  $(2,0)$ and $(0,2)$
multiplets describe massive ${\cal N}=4$ vector bosons, while the $(1,1)$ multiplets  contain
also spin-${3\over 2}$ fields.   These latter are also special for another reason: they are Stueckelberg
fields  capable of  rendering   the ${\cal N}=4$ graviton multiplet  massive \cite{Bachas:2017rch, Bachas:2019rfq}. 
In representation theory  they are the unique short multiplets that can combine with the
conserved energy-momentum tensor into  a   
   long multiplet.   This monogamous relation will allow us to identify them unambiguously in  
   the superconformal index. 

          More generally, one cannot   distinguish  in  the superconformal index the contribution of the 
    ${\cal N}=2$ chiral ring,  which contains scalar operators with arbitrary $(J^H, J^C)$,  
   from  contributions of other   short  multiplets.   Two exceptions to this   rule
    are the pure Higgs- and 
   pure Coulomb-branch chiral rings whose  $R$-symmetry quantum numbers 
  are   $(J^H, 0)$ and $(0, J^C)$. The corresponding multiplets
   are absolutely protected, i.e.\ they
    can never   recombine to form  long representations of  the
    ${\cal N}=4$ superconformal algebra    \cite{Cordova:2016emh}. 
    These  two  subrings of the chiral ring can thus be unambiguously identified. Their generating functions
    (known  as  the Higgs-branch  and Coulomb-branch Hilbert series
    \cite{Hanany:2011db,Cremonesi:2013lqa,Cremonesi:2014uva,Cremonesi:2014kwa,Cremonesi:2017jrk})  are indeed  simple limits of the
    superconformal index \cite{Razamat:2014pta}. 
   Arbitrary  elements of the chiral ring, on the other hand,  are out of reach of  presently-available 
   techniques.\footnote{Though there do exist some  interesting suggestions   \cite{Cremonesi:2016nbo,Carta:2016fjb} on which we will 
    comment at the end of this paper.}
    Fortunately this will not be an obstacle for the marginal $(1,1)$ operators  of interest here.  
\smallskip

       The  result of our calculation has no  surprises. As we will show, the  mixed marginal  operators
       transform in the  $({\rm Adj}, {\rm Adj}, 0)$  representation of the global symmetry \eqref{Ggl}, 
       up to some overcounting  when (and only when)   the quivers of 
       $T_\rho^{\hat\rho}\left[SU(N)\right]$ have  abelian gauge  nodes.\footnote{However, mixed marginal operators of more general 3d ${\cal N}=4$ theories may transform in a representation larger than $({\rm Adj}, {\rm Adj}, 0)$.  We give an example in \autoref{exs}.}
       More generally, the set
       of   all marginal ${\cal N}=2$ operators is of the form
       \bee
            \sym^2 ( {\rm Adj}\, G  +  {\rm Adj}\, \hat G)  \, + \, [\text{length-4 strings}]  \ - \ {\rm redundant} \ ,  
       \eeq
       where $\sym^2$ is the symmetrized square  of representations, and  the  `length-4 string'  operators  are   
       quartic superpotentials made out of the  hypermultiplets of the electric or the magnetic theory  only.
       All redundancies  arise  due to symmetrization and electric or magnetic 
       $F$-term conditions. Calculating them
        is the main  technical result of our  paper. On the way  we will  find  also some  
       new checks of  3$d$  mirror symmetry. 
       
         %\smallskip
         
           Our  calculation settles one  issue  about the dual AdS moduli that was
            left open in ref.\,\cite{Bachas:2017wva}. As is standard in holography, the  global 
            symmetries $G$ and $\hat G$ of the  CFT are realized as   gauge  symmetries on the gravity side. 
            The corresponding ${\cal N}=4$ vector bosons live on 
             stacks of magnetized D5-branes and NS5-branes which wrap two different 2-spheres
             (S$^2_H$ and S$^2_C$)  in the ten-dimensional
             spacetime~\cite{Assel:2011xz}. The $R$-symmetry spins $J^H$ and $J^C$
             are the   angular momenta    on these  spheres.   
              As was explained in~\cite{Bachas:2017wva},  the   Higgs-branch superconformal moduli correspond to
              open-string states on the D5-branes: 
               either    non-excited  single  strings with $J^H=2$,   or bound states of two  $J^H=1$ strings.  
              The Coulomb branch   superconformal moduli correspond likewise to open D-string states on   NS5-branes.
              For  mixed moduli ref.\,\cite{Bachas:2017wva} suggested two possibilities:  either 
              bound states of a $J^H=1$
              open string on the   D5-branes with a $J^C=1$   D-string  from  the  NS5 branes,  or single closed-string
              states that are scalar partners of massive gravitini. Our results  
                       rule out the  second possibility 
                      for  the backgrounds that are  dual to   linear 
                      quivers.\footnote{%
                      In the interacting theory   single- and multi-string states  with the same charges  
                       mix and cannot be distinguished. The above
                      statement  should be understood in the sense of cohomology: 
                        in  linear-quiver theories  all  $(1,1)$ elements 
                       of the  $\Delta=2$ chiral ring are accounted for by 2-string states.}
                         
                              It was also noted in ref.\,\cite{Bachas:2017wva} that although gauged ${\cal N}=4$  
                               supergravity can in principle
                              account for the $(2,0)$ and $(0,2)$ moduli that are scalar partners of spontaneously-broken
                              gauge bosons,  it has no massive spin-${3\over 2}$ multiplets to account for single-particle
                              $(1,1)$ moduli. But if  {all}  (1,1)  moduli are  2-particle states, they can  be
                              in principle   accounted for by modifying 
                               the AdS$_4$  boundary conditions 
                              \cite{Witten:2001ua,Berkooz:2002ug}.  The dismissal  in ref.\,\cite{Bachas:2017wva}
                              of gauged supergravity,   as
                              not capturing   the entire moduli space,   was thus   premature.   
                          We stress  however that changing the   boundary conditions does  not affect the classical 
                          AdS solution  but only  the fluctuations around it.  Put differently these   moduli 
                          show up only upon  quantization.   The 
                           analysis of   ${\cal N}=2$ AdS$_4$  moduli spaces in 
                          gauged supergravity~\cite{deAlwis:2013jaa} must be  revisited  in order 
                           to incorporate  such  `quantization moduli.'

                    \smallskip
                                                
                      This paper is organized as follows: \Sectionref{sec:2}   reviews some generalities
                      about   good $T_\rho^{\hat{\rho}}[{  SU(N)}]$ theories,   and exhibits  their  superconformal index
                      written as a multiple   integral and sum   over Coulomb-branch moduli and monopole fluxes. 
                      Our aim is  to recast  this expression into   a sum of superconformal
                      characters with   fugacities restricted as pertaining to the index. 
                                                                  These restricted  
                                                                 characters and the  linear relations that they obey are
                                                                 derived  in \autoref{sec3}.  We  
                                                                   also  explain in this section  how   the   
                                                                 ambiguities  inherent in the decomposition of the  index  as a sum
                                                                 over representations can be resolved  for the problem at hand. 
                        
                            \Sectionref{sec:4}  contains our  main calculation. We first expand the determinants  so as 
                           to only keep  contributions from operators with scaling dimension 
                          $\Delta \leq 2$, and then  perform  explicitly the
                              integrals  and  sums.  
                          The result is re-expressed as a sum of
                          characters of $OSp(4\vert 4)\times G\times \hat G$ in  \autoref{sec:5}. 
                           We  identify the superconformal moduli,   
                           comment on   their holographic interpretation (noting the role of a stringy exclusion
                           principle)
                           and conclude. 
                          Some   technical material is relegated to  appendices. \Appendixref{appB}
                          sketches
                           the derivation  of the superconformal index as a localized path  integral over the Coulomb branch. 
                           This is  standard material included   for the reader's convenience.  In 
                          \autoref{sec:A}  we  prove  a combinatorial lemma    needed    in 
                          the  main calculation.  Lastly a closed-form expression for
                          the index of  
                          $T[SU(2)]$,  which is sQED$_3$ with two `selectrons', is derived in  
                           \autoref{app:C}. This  
                          renders  manifest   a general property  (which we do not use in this  paper), namely
                            the  factorization of the index  in  holomorphic blocks  \cite{Pasquetti:2011fj,Dimofte:2011py,Beem:2012mb}.

\emph{Note added:} Shortly before ours, the paper~\cite{Okazaki:2019ony} was posted to the arXiv.  It checks mirror symmetry by comparing the index of mirror pairs, including many examples of coupled 4d-3d systems.  The papers only overlap marginally.

%%%%%%%%%%%%%%%

\section{Superconformal index of \texorpdfstring{$T_\rho^{\hat{\rho}}[{SU(N)}]$}{T[SU(N)]}}\label{sec:2}

\subsection{Generalities}
We   consider  the 3d  ${\cal N}=4$ gauge theories  \cite{Hanany:1996ie} based on the linear quivers 
  of \autoref{fig:linquiv}.  
 Circle nodes  in these quivers stand for  unitary gauge groups  $U(N_i)$, squares 
designate  fundamental hypermultiplets 
and  horizontal links stand for
bifundamental hypermultiplets.  One can   generalize  to theories with
orthogonal and symplectic gauge groups  and to quivers with non-trivial topology, but we
 will not consider  such  complications here. We are interested in the infrared limit  of `good
 theories'  \cite{Gaiotto:2008ak}  for which  $N_{j-1}+N_{j+1}+M_j\geq 2N_j\ \forall j$. 
These conditions ensure 
  that at a
 generic point of the Higgs branch
  the gauge symmetry is   completely broken. 
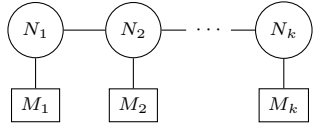
\begin{figure}\centering
  \begin{tikzpicture}
    \node(N1) [circle,draw] at (-.3,0) {\scriptsize $N_1$};
    \node(N2) [circle,draw] at (1,0) {\scriptsize $N_2$};
    \node(Ndots) at (2,0) {\scriptsize ${\cdots}$};
    \node(Nk) [circle,draw] at (3,0) {\scriptsize $N_k$};
    \node(M1) [rectangle,draw] at (-.3,-1) {\scriptsize $M_1$};
    \node(M2) [rectangle,draw] at (1,-1) {\scriptsize $M_2$};
    \node(Mk) [rectangle,draw] at (3,-1) {\scriptsize $M_k$};
    \draw (N1) -- (N2) -- (Ndots) -- (Nk);
    \draw (M1) -- (N1);
    \draw (M2) -- (N2);
    \draw (Mk) -- (Nk);
  \end{tikzpicture}
  \caption{\label{fig:linquiv}Linear quiver  with gauge group $U(N_1)\times\dots\times U(N_k)$.}
\end{figure}

   The   theories are defined in the ultraviolet (UV) by the standard
  ${\cal N}=4$ Yang-Mills plus matter 3d  action. All  masses and Fayet-Iliopoulos terms are set to zero
  and there are no Chern-Simons terms. We choose the vacuum  at the origin of both the Coulomb and Higgs branches,
  where all scalar expectation values vanish. 
Thus the only continuous parameters  are the dimensionful gauge couplings  $g_i$,
 which flow to infinity in the   infrared. 
 
   Every good  linear quiver has a mirror which is also a good linear quiver  and whose
   discrete data we denote  by  hats, 
    $\{\hat N_{\hat\jmath}, \hat M_{\hat\jmath}, \hat k \}$.
    A useful parametrization of both quivers  is in terms of an ordered
    pair of partitions, $(\rho, \hat\rho)$ with $\rho^T > \hat\rho$, see  \autoref{sec:A}.
               The SCFT has global  (electric and magnetic) flavour symmetries
    \bee
  G  \times \hat G  =   \Bigl(\prod_j U(M_j)\Bigr)/U(1)\, \times\, \Bigl(\prod_{\hat\jmath} 
   U(\hat M_{\hat\jmath})\Bigr)/U(1)\ ,
    \eeq
    with rank$\,G = \hat k$ and rank$\,\hat G =k$.   In  the string-theory embedding the 
    flavour symmetries are realized on  $(\hat k +1)$   D5-branes and $(k+1)$  NS5-branes
     \cite{Hanany:1996ie}.  
    The  symmetry  $G$
    is manifest in the microscopic  Lagragian of the electric theory,   
    as  is the Cartan subalgebra of $\hat G$ which is the
    topological symmetry whose  conserved currents are  the dual field strengths  $\Tr{\star F_{(j)}}$. 
    The non-abelian extension of $\hat G$ is realized in the infrared by monopole operators \cite{Borokhov:2002ib,Borokhov:2003yu}.

    \smallskip
    In addition to $G  \times \hat G$  the infrared fixed-point theory has global superconformal symmetry. 
         The  ${\cal N}=4$  superconformal group  in three dimensions is
 $OSp(4\vert 4)$. It  has
  eight real Poincar\'e 
  supercharges transforming in the  $({1\over 2}, {1\over 2}, {1\over 2})$ 
    representation of  $SO(1,2)\times SO(3)_{H}\times SO(3)_{C} $.
 The two-component 3d Lorentzian spinors  can be
  chosen real.  The marginal deformations studied in this paper  leave unbroken an
  ${\cal N}=2$ superconformal symmetry   $ {OSp}(2\vert 4) \subset 
   OSp(4\vert 4)$. 
      This is  generated by  two out of the four  real  $SO(1,2)$
   spinors, so modulo   $SO(4)_R$ rotations the embedding   is unique. 
   Let  $ Q^{(\pm\pm )} $
 be a complex basis for  the four  Poincar\'e supercharges, 
 where the   superscripts
are the eigenvalues of  the diagonal R-symmetry generators  $J_3^H$ and $ J_3^C$. 
  Without loss of generality we can choose the two unbroken  supercharges
  to be the complex pair $Q^{(++)}$ and $Q^{(--)}$, so that the  ${\cal N}=2$
$R$-symmetry is generated by $J_3^H+  J_3^C$  and 
the extra commuting $U(1)$   by $J_3^H - J_3^C$\,. We  use this same basis
in   the definition of the superconformal  index. 
 
         %%%%%        
  
  \subsection{Integral expression for the index}  
        
    There is  a large  literature on the ${\cal N}=2$ superconformal index in three dimensions,
    for a partial list of references see
     \cite{Bhattacharya:2008zy,Kim:2009wb,Imamura:2011su,Krattenthaler:2011da,Kapustin:2011jm,Benini:2013yva,Willett:2016adv}.   The index is defined in terms of the cohomology of the supercharge~$Q^{(++)}_-$.  It is a weighted sum over  
     local operators of the SCFT, or equivalently  over all quantum  states on the two-sphere, 
\begin{equation}\label{sci}
\mathcal{Z}_{S^{2}\times S^1} =\Tr_{\mathcal{H}_{S^{2}}}(-1)^{F}q^{\frac{1}{2}(\Delta + J_3)}t^{J_3^H-J_3^C}e^{-\beta(\Delta
-J_3 -J_3^H -J_3^C)} \ . 
\end{equation}
In this formula $F$ is the fermion number of the state, $J_3$ the third component of the spin, 
$\Delta$ the  energy, and $q, t, e^{-\beta}$ are fugacities. 
Only states for which $\Delta = 
 J_3 + J_3^H+ J_3^C$ contribute to the index, which is therefore independent of the fugacity $\beta$. 
 
 The non-abelian $R$-symmetry guarantees
(for good theories) 
that the $U(1)_R$ of the ${\cal N}=2$ subalgebra is the same in the ultraviolet and the infrared. 
We can therefore compute  $\mathcal{Z}_{S^{2}\times S^1}$
 in the UV where the 3d gauge theory is free.
 The   index can be further refined by turning on fugacities for the flavour symmetries, 
 and  background fluxes on $S^2$ for the flavour groups \cite{Kapustin:2011jm}. 
 In our calculation we will include flavour fugacities   but set the  flavour fluxes to zero. 
 
      The   superconformal index eq.\,\eqref{sci}   is the appropriately twisted partition function of the theory
 on $S^2\times S^1$. It can be computed  by supersymmetric 
 localization of   the functional   integral, 
for a review   see ref.\,\cite{Willett:2016adv}. 
   For each  gauge-group factor  $U(N_j)$
    there is a sum over monopole charges $\{ m_{j, \alpha}\}\in\ZZ^{N_j}$ and 
    an integral over gauge fugacities  (exponentials of  gauge holonomies)
    $\{ z_{j, \alpha}\}\in U(1)^{N_j}$. The calculation is standard   and
     is summarized  in \autoref{appB}.  The result is most conveniently expressed with the help  of the
     plethystic exponential (PE) symbol, 
    \begin{equation}\label{fullindexoflinearquiverQpQm} 
  \begin{aligned}
    \mathcal{Z}_{S^{2}\times S^1}
    & = \! \prod_{j=1}^k\Biggl[ \frac{1}{N_j!} \! \sum_{m_j\in\ZZ^{N_j}} \! \int \! \prod_{\alpha=1}^{N_{j}}\frac{dz_{j,\alpha}}{2\pi i z_{j,\alpha}} \Biggr] \Biggl\{  \! (q^{\frac{1}{2}}t^{-1})^{\Delta({\bf m})} \! 
     \prod_{j=1}^k      \biggl[   \prod_{\alpha=1}^{N_j} w_j^{ m_{j,\alpha}} \! \!
     \prod_{\alpha\neq\beta}^{N_j} (1-q^{{1\over 2} \abs{ m_{j,\alpha}-m_{j,\beta}}}z_{j,\beta}z_{j,\alpha}^{-1})
     \biggr]
    \\
    & \qquad \times {\rm PE} \Biggl( \sum_{j=1}^k \sum_{\alpha,\beta=1}^{N_j} \frac{q^{1\over 2} (t^{-1}-t)}{1-q}\,
    q^{ \abs{ m_{j,\alpha}-m_{j,\beta}}}z_{j,\beta}z_{j,\alpha}^{-1}
    \\
    & \qquad\qquad\qquad  + \frac{(q^{1\over 2} t)^{1\over 2} (1 - q^{1\over 2} t^{-1})}{1-q}
    \sum_{j=1}^k \sum_{p=1}^{M_j} \sum_{\alpha=1}^{N_j}
   q^{{1\over 2} \abs{m_{j,\alpha}}} \sum_{\pm}z_{j,\alpha}^{\mp 1}\mu_{j,p}^{\pm 1}
    \\
    & \qquad\qquad\qquad + \frac{(q^{1\over 2} t)^{1\over 2} (1 - q^{1\over 2} t^{-1})}{1-q}
    \sum_{j=1}^{k-1} \sum_{\alpha=1}^{N_j}\sum_{\beta=1}^{N_{j+1}}
    q^{{1\over 2} \abs{m_{j,\alpha}-m_{j+1,\beta}}} \sum_{\pm}z_{j,\alpha}^{\mp 1}z_{j+1,\beta}^{\pm 1}
    \Biggr)
    \Biggr\}\ .  
  \end{aligned}
\end{equation} 
  Here    $z_{j, \alpha}$ is 
  the   $S^1$ holonomy of the   $U(N_j)$ gauge field 
 and $m_{j, \alpha}$ its   2-sphere  fluxes ({\it viz}. the monopole charges of the corresponding local operator in $\mathbb{R}^3$) with  $\alpha$ labeling  the Cartan generators; 
  $\mu_{j,p}$ are  flavour fugacities with $p = 1, \ldots, M_j$, 
  and  $w_j$ is a fugacity for the topological $U(1)$  
whose conserved current is 
$\Tr{\star F_{(j)}}$\,. 
    The plethystic exponential of a  function  
     $f(v_1, v_2, \cdots )$  is given  by  
 \bee
 {\rm PE}(f) = \exp\left( \sum_{n=1}^\infty {1\over n} f(v_1^n, v_2^n, \cdots) \right)\ . 
 \eeq
  Finally ${\bf m}$ denotes collectively all magnetic charges, and the crucial exponent $\Delta({\bf m}) $ reads
  \begin{equation}\label{Delta}
  \Delta({\bf m})  =  -{1\over 2} \sum_{j=1}^k \sum_{\alpha,\beta=1}^{N_j}  \abs{m_{j,\alpha}-m_{j,\beta}}
  + {1\over 2} \sum_{j=1}^k M_j \sum_{\alpha=1}^{N_j} \abs{m_{j,\alpha}}
  + {1\over 2} \sum_{j=1}^{k-1} \sum_{\alpha=1}^{N_j} \sum_{\beta=1}^{N_{j+1}} \abs{m_{j,\alpha}-m_{j+1,\beta}} \ .
\end{equation}
Note that the smallest power of $q$ in any given monopole sector
is
${1\over 2} \Delta({\bf m})$. Since the contribution of any  state to the index 
is proportional to $q^{{1\over 2}(\Delta + J_3)}$, we see  that  $\Delta({\bf m})$
is the   Casimir energy of the ground state in the sector ${\bf m}$, or equivalently 
 the scaling dimension [and the $SO(3)_C$ spin]  of the 
corresponding monopole operator \cite{Borokhov:2002ib,Borokhov:2003yu}. 
As  shown by Gaiotto and Witten \cite{Gaiotto:2008ak}  this
dimension  is strictly positive (for ${\bf m}\neq 0$) for all the good theories
   that interest us  here. 
  \smallskip 
  
      We would now like  to extract from the index \eqref{fullindexoflinearquiverQpQm}  the number,     
      flavour 
      representations and $U(1)$  charges
      of all marginal ${\cal N}=2$  operators.  To this end we need to rewrite  the index 
      as a sum over characters of the global\  $OSp(4\vert 4)\times G\times \hat G$\  symmetry, 
       \bee\label{chdec}
      \mathcal{Z}_{S^{2}\times S^1} =  \sum_{(\mathfrak{R}, {\bf r}, {\bf \hat r})}  
      {\cal I}_\mathfrak{R}  ( q, t)\, \chi_{\bf r}(\mu) \chi_{\bf \hat r}(\hat \mu ) \ 
      \eeq
      where the sum runs over all triplets of representations $(\mathfrak{R}, {\bf r}, {\bf \hat r})$, 
      $\chi_{\bf r}$  and $\chi_{\bf \hat r}$ are characters of $G$ and $ \hat G$, 
      and ${\cal I}_\mathfrak{R}  $ are characters of $OSp(4\vert 4)$ with 
      fugacities restricted as pertaining for  the index.\footnote{The restriction on fugacities can also be understood as the fact that ${\cal I}_\mathfrak{R}$ are characters of the commutant of $Q^{(++)}_-$ inside $OSp(4|4)$.}
      To proceed we must now  make a detour to   review   the  unitary representations of the
       ${\cal N}=4$ superconformal algebra in three dimensions.

   %%%%%%%%%%%%%%%%

\section{Characters of \texorpdfstring{$OSp(4\vert 4)$}{Osp(4|4)} and Hilbert series}\label{sec3}
 
 \subsection{Representations and recombination rules}
 
   All  unitary highest-weight representations 
   of $OSp(4\vert 4)$ have been  classified in refs.\,\cite{Dolan:2008vc,Cordova:2016emh}. 
  As shown in these references, in addition to the generic long representations
    there exist three  series  of short or BPS representations: 
 \bee\label{listr}
  A_1[j]_{1+j+j^H+j^C}^{(j^H,\, j^C)}\quad (j >0) \ , \qquad
  A_2[0]_{1 +j^H+j^C}^{(j^H,\, j^C)}\ , \quad {\rm and}\quad
  B_1[0]_{j^H+j^C}^{(j^H,\, j^C)}\ . 
 \eeq
 We follow  the notation of \cite{Cordova:2016emh} where $[j]_\delta^{(j^H, \, j^C)}$ denotes a superconformal
 primary  with energy  
  $\delta$,  and   $SO(1,2)\times SO(3)_H\times SO(3)_C$ 
 spin quantum numbers $j, j^H, j^C$.\footnote{Factor of~$2$ differences from \cite{Cordova:2016emh}
 are because we use spins rather than Dynkin labels.}
     We   use  lower-case symbols for the quantum numbers
 of the superconformal primaries  in order to distinguish them from  those of arbitrary states in the representation.
 The subscripts labelling $A$ and  $B$  indicate  the level of the first null states in  the representation. 
 
     The $A$-type representations lie at the unitarity threshold  ($\delta_A = 1+j+j^H+j^C$)
 while those of  $B$-type are separated from this  threshold by a gap,  $\delta_B =  \delta_A  -1$. 
 Since  for  short representations
  the primary dimension $\delta$  is fixed 
  by the spins and the representation type, 
 we will from now on   drop  it  in order to make the notation lighter. 
 
 \smallskip
 
    The general character of  $OSp(4\vert 4)$ is a function of four fugacities, corresponding to  the
    eigenvalues of the four commuting bosonic generators $J_3, J_3^H, J_3^C$ and $\Delta$.  For  the index
    one fixes the fugacity of   $J_3$ and then  a second fugacity automatically  drops out. More explicitly
 \begin{equation}
   \begin{aligned}
     {\cal I}_\mathfrak{R}  ( q, t)  & = \chi_\mathfrak{R}(e^{i\pi}, q, t, e^{\beta})
     \\
     &   {\rm where} 
     \quad  \chi_\mathfrak{R}( w, q, t, e^{\beta})  = \Tr_\mathfrak{R}\,  w^{2J_3} q^{\frac{1}{2}(\Delta + J_3)}t^{J_3^H-J_3^C}e^{-\beta(\Delta
       -J_3 -J_3^H -J_3^C)}\ . 
   \end{aligned}
 \end{equation}
 Although  general  characters are linearly-independent functions,  this is not the case for  indices. 
 The index of  long representations  vanishes, 
 and  the indices of  short representations that can recombine into a long one  sum up
   to zero. This is why, as is well known, $\mathcal{Z}_{S^{2}\times S^1}$ does not  determine 
   (even) the BPS spectrum of the theory unambiguously. Fortunately, we can avoid this
   difficulty for our purposes here,  as we will now explain.

   In  any 3d ${\cal N} = 4$ SCFT 
      the    ambiguity  in extracting the BPS spectrum from the index 
     can be   summarized  by the following recombination rules   \cite{Cordova:2016emh}
   \begin{subequations} \label{recomb}
\begin{align}  
 L[0]^{(j^H  \hskip -0.8mm ,\,  j^C)} 
 \to   & \ \  A_2[0]^{(j^H  \hskip -0.8mm ,\,  j^C)}   \, \oplus \, B_1[0]^{(j^H+1,\, j^C+1)} \ , 
 \\
   L\bigl[\tfrac{1}{2}\bigr]^{(j^H  \hskip -0.8mm ,\,  j^C)}  \to
   &  \ \   A_1\bigl[\tfrac{1}{2}\bigr]^{(j^H  \hskip -0.8mm ,\,  j^C)} 
   \,  \oplus\,  A_2[0]^{(j^H+{1\over 2},\,  j^C+{1\over 2})} \ , 
\\
   L[{j}\geq 1]^{(j^H  \hskip -0.8mm ,\,  j^C)}  \to
   &  \ \   A_1[j]^{(j^H  \hskip -0.8mm ,\,  j^C)} 
   \, \oplus\,  A_1\bigl[j- \tfrac{1}{2}\bigr]^{(j^H+{1\over 2},\, j^C+{1\over 2})} \ . 
\end{align}    
     \end{subequations}       
 The long representations on the  left-hand side 
 are  taken at  the 
    unitarity threshold $\delta\to \delta_A$.  
             From  these recombination rules one sees that 
              the characters of  the $B$-type  multiplets  form a basis for  contributions
             to the index.  Simple induction indeed gives
             \bee\label{relations}
            (-)^{2j}\,   \mathcal{I}_{A_1[j]^{(j^H  \hskip -0.8mm ,\,  j^C)} }  \, =\,  
              \mathcal{I}_{A_2[0]^{(j^H+j  ,\,  j^C +j)}} \, =\, 
                - \mathcal{I}_{B_1[0]^{(j^H+j+1  ,\,  j^C +j+1)} } \ . 
             \eeq
 We need therefore to  compute the index only for   $B$-type multiplets. 
 The  decomposition  of these latter into highest-weight  representations  
        of  the bosonic subgroup  $SO(2,3)\times SO(4)$ can be found  
        in  ref.\,\cite{Cordova:2016emh}.  Using the known characters of $SO(2,3)$ and $SO(4)$ and taking 
     carefully    the 
         limit    $w\to e^{i\pi}$ leads to 
    the following   indices
             \begin{subequations} \label{indexlist}
\begin{align}   
&\mathcal{I}_{B_1[0]^{(0,0)}} = 1\ , 
\\  
        &\mathcal{I}_{B_1[0]^{(j^H >0, 0)} }    \, =\,  (q^{ {1\over 2}} t)^{j^H}\, {  (1  - 
        q^{1\over 2} t^{-{1}} ) \over (1-q)} \ , 
        \\
             &\mathcal{I}_{B_1[0]^{(0, j^C >0)} }    \, =\, (q^{ {1\over 2}} t^{-1})^{j^C}\,  {  (1  - 
        q^{1\over 2} t  ) \over (1-q)} \ , 
        \\
        & \mathcal{I}_{B_1[0]^{(j^H>0, j^C>0)} }  \, =\, q^{{1\over 2}(j^H+j^C)} t^{j^H-j^C}  {(1
        -q^{1\over 2} (t+  t^{-1})   + q)\over (1-q)} \ . 
 \end{align}    
     \end{subequations}    
  Note that all  superconformal primaries of type $B$  are scalar fields 
   with  $\delta =  j^H + j^C$, so one of them  saturates 
         the BPS bound $\delta = j_3+j_3^H+j_3^C$ and contributes 
         the leading power $q^{{1\over 2}( j^H + j^C)}$ to the index. Things work differently for type-$A$ multiplets whose primary states have $\delta = 1+j+j^H+j^C > j_3+j_3^H+j_3^C$, 
     so they cannot contribute to the index. Their descendants can however saturate the BPS bound
     and contribute, because even though a
     Poincar\'e  supercharge  raises the   dimension  by  ${1\over 2}$, it can at the same time
      increase  $J_3+J_3^H+J_3^C$ by 
     as much as ${3\over 2}$.  
 
    %%%%%%%%%%%%%%
          
 \subsection{Protected multiplets and Hilbert series}\label{sec:HS}    
        
            General contributions to the index can   be  attributed either to a $B$-type
             or  to an $A$-type  multiplet. 
       There exists,  however,  a special class  of  {\emph {absolutely protected}} $B$-type representations which  do not
   appear in the decomposition of any long multiplet. Their contribution to the index can therefore
   be extracted unambiguously.   Inspection of    \eqref{recomb} gives  the following list of multiplets that are
   \bee\label{absprot}
   \underline{\rm absolutely\  protected}: \qquad B_1[0]^{(j^H, j^C)}\qquad {\rm with}\quad 
   j^H \leq {1\over 2}  \quad {\rm or}\quad   j^C\leq {1\over 2}\ . 
   \eeq
   Consider in particular the $B_1[0]^{(j^H, 0)}$ series.\footnote{The representations 
   $B_1[0]^{(j^H, {1\over 2})}$ and  $B_1[0]^{( {1\over 2},\, j^C)}$  only appear in theories with free (magnetic or electric) hypermultiplets and  play no role in good theories.}
   The highest-weights of   these
   multiplets are  chiral ${\cal N}=2$ scalar fields  that do not transform under $SO(3)_C$
   rotations. This is  precisely the Higgs-branch chiral ring   consisting   of    operators 
   made out of ${\cal N}=4$ hypermultiplets of the electric quiver. 
    It is  defined entirely  by the classical $F$-term conditions. 
    Likewise the highest-weights of the $B_1[0]^{(0, j^C)}$ series, which are
   singlets of  $SO(3)_H$, 
   form the   chiral ring of the Coulomb branch whose building blocks are magnetic hypermultiplets. 
  Redefine the fugacities as follows
   \bee
  x_\pm =  q^{1\over 4}t^{\pm{1\over 2}}\ . 
  \eeq
   It follows then  immediately   from \eqref{indexlist} that in the limit  $x_-=0$  the index only receives
   contributions from the Higgs-branch chiral ring, while  in   the limit $x_+=0$
  it only receives contributions from   the chiral ring of the Coulomb branch. 
  
        The generating functions  of these chiral rings, graded according to their dimension and quantum numbers under global symmetries, are known as Hilbert series ($\operatorname {HS}$).  In the context of 3d ${\cal N}=4$ theories
  elegant general formulae for the Higgs-branch  and Coulomb-branch Hilbert series  were derived in      
  refs.\,\cite{Hanany:2011db,Cremonesi:2013lqa,Cremonesi:2014uva,Cremonesi:2014kwa}, see also   \cite{Cremonesi:2017jrk}
  for a review.  It follows  from our  discussion that 
  \bee
      \mathcal{Z}_{S^{2}\times S^{1}}\Bigl\vert_{x_-=0} \, =\,   \operatorname{ HS}^{\rm Higgs}(x_+) \  \qquad
      {\rm and}\qquad 
      \mathcal{Z}_{S^{2}\times S^{1}}\Bigl\vert_{x_+=0} \, =\,   \operatorname{ HS}^{\rm Coulomb}(x_-) \ . 
      \eeq 
       These relations between the superconformal index and the Hilbert series were established in   ref.\,\cite{Razamat:2014pta}
        by matching the corresponding  integral  expressions. 
        Here we   derive them  directly  from the  ${\cal N}=4$ superconformal characters.

                                                                                                                                                                                                                                                                                                            What  about  other operators  of the chiral ring?  The complete ${\cal N}=2$ chiral ring consists of the 
                                                                                                                                                                                                                                                                                                            highest weights in \underline{all}   $B_1[0]^{(j^H, j^C)}$ multiplets of the 
                                                                                                                                                                                                                                                                                                            theory.\footnote{The $A$-type multiplets do not contribute to
                                                                                                                                                                                                                                                                                                            the chiral ring, since none has scalar states that saturate  the BPS bound   
                                                                                                                                                                                                                                                                                                             (i.e.\ $\Delta =   J_3^H+J_3^C$ {\it and} $J=0$).}  As   seen however  from eq.\,\eqref{relations} 
                                                                                                                                                                                                                                                                                                             the mixed-branch
                                                                                                                                                                                                                                                                                                             operators (those with both $j^H$  and $j^C$  $ \geq 1$)  cannot be extracted unambiguously  from the
      index. This shows that there is no simple relation between the  Hilbert series of the full chiral ring
      and the superconformal index. The Hilbert series is better adapted for studying supersymmetric
      deformations of a SCFT, but we lack a
                                                                                                                                                                                                                                                                                                             general method to compute it (see however 
                                                                                                                                                                                                                                                                                                              \cite{Cremonesi:2016nbo,Carta:2016fjb} for interesting ideas in this direction). 
   Fortunately    these complications will  not be important  for the problem at hand.

                                                                                                                                                                                                                                                                                                            The reason  is that  marginal deformations exist only in the restricted set of 
                                                                                                                                                                                                                                                                                                           multiplets:   
    \bee\label{deformm}
     \underline{\rm marginal}: \qquad  B_1[0]^{(j^H, j^C)}\qquad {\rm with}\quad   j^H+j^C = 2\ .  
    \eeq
 These are   in the absolutely protected list \eqref{absprot} with the exception of  $B_1[0]^{(1,1)}$, 
 a very  interesting multiplet  that contains also four spin-3/2 fields in its spectrum.  
This multiplet  is not absolutely protected, but it  is part of a `monogamous relation': its 
  unique  recombination partner is   $A_2[0]^{(0,0)}$ and vice versa. 
  Furthermore    $A_2[0]^{(0,0)}$  is the 
   ${\cal N} = 4$
  multiplet 
  of the conserved energy-momentum tensor  \cite{Cordova:2016emh},\footnote{In the
  dual gravity theory, this recombination   makes the ${\cal N} =4$ supergraviton massive.
  Thus  $B_1[0]^{(1,1)}$  is  a Stueckelberg
   multiplet for the `Higgsing' of  ${\cal N} =4$  AdS supergravity  \cite{Bachas:2017rch, Bachas:2019rfq}. 
  We also  note in passing ref.\,\cite{Distler:2017xba} where  the monogamous relation
   is used in order to extract the
   number of conserved energy-momentum tensors from the superconformal index of $d=4$ class-S theories.}
  which is unique  in any irreducible SCFT\@. 
  As a result the  contribution  of   $B_1[0]^{(1,1)}$ multiplets can  be  also  unambiguously extracted from  the
    index.
 \smallskip 
    
   A similar  though   weaker form of the   argument actually applies to  all\,  ${\cal N}=2$ SCFT\@.  
   Marginal chiral operators belong to short 
     $OSp(2\vert 4)$ multiplets whose only  recombination partners are the conserved 
      ${\cal N}=2$ vector-currents.    We already alluded to 
  this fact  when explaining why  the  3d ${\cal N}=2$ superconformal manifold has the  
         structure of  a moment-map quotient \cite{Green:2010da}. 
     If  the global symmetries of the SCFT are known (they are not  always manifest), one can  extract unambiguously
      its marginal
     deformations from the index (see e.g., \cite{Razamat:2016gzx,Fazzi:2018rkr} for applications).

   %%%%%%%%%%%%%
         
 \section{Calculation of the index}\label{sec:4}
   
    We turn now to the main calculation of this paper, namely the expansion of the 
    expression \eqref{sci} in terms of characters of the global symmetry  $OSp(4\vert 4)\times G\times \hat G$.
    Since we are only interested in the marginal  multiplets \eqref{deformm}  
    whose contribution  starts at order~$q$, it will be sufficient  to expand 
   the  index    to this order. In terms of the fugacities $x_\pm$ we must keep terms up to order~$x^4$. 
             As we have just seen,  each of the terms  in the expansion to this order can be 
             unambiguously  attributed to
  an ${ {OSp}}(4\vert 4)$ representation.

      We  will organize the calculation in terms of the magnetic Casimir energy  
        eq.\,\eqref{Delta}. 
  We start   with the zero-monopole sector, and then proceed to positive   values of   $\Delta(\bf m)$. 
  
  %%%%%%%%%%%%%
  
  \subsection{The zero-monopole sector}\label{sec:41}
  
    In the ${\bf m} =0$ sector  all  magnetic  fluxes  vanish and the gauge symmetry  is unbroken. 
  The expression in front of  the plethystic exponential in \eqref{sci}   reduces to
  \bee
   \prod_{j=1}^k\Biggl[ \frac{1}{N_j!}   \int \prod_{\alpha=1}^{N_{j}}\frac{dz_{j,\alpha}}{2\pi i z_{j,\alpha}}  
     \prod_{\alpha\neq\beta}^{N_j} (1- z_{j,\beta}z_{j,\alpha}^{-1})
     \Biggr]\ .  
  \eeq
 This  can be  recognized as the invariant Haar measure for  the gauge group $\prod_{j=1}^k U(N_j)$. 
 The  measure is   normalized so that for any irreducible representation $R$ of $U(N)$
 \bee
  \frac{1}{N!} \int \prod_{\alpha=1}^{N }\frac{dz_{ \alpha}}{2\pi i z_{ \alpha}}  
     \prod_{\alpha\neq\beta}^{N } (1- z_{ \beta}z_{ \alpha}^{-1})\,\chi_R
     (z) \ = \ \delta_{R, 0}\ . 
 \eeq
Thus the   integral   projects    to gauge-invariant  states,  
 as expected. 
 We  denote  this operation on any combination, $X$,  of 
  characters as      $X\bigl\vert_{\rm singlet}$\,. 
  
 \smallskip
  
  Since we work to order $O(q)$ we may drop the 
  denominators $ (1-q)$ in the plethystic exponential.  
    The contribution of the ${\bf m}=0$ sector to the index  can then be written as 
 \bee\label{43} 
   \mathcal{Z}_{S^{2}\times S^{1}}^{{\bf m}=0}   =  
    {\rm PE} \Bigl(   
     x_+ (1 - x_-^2) X   + (x_-^2-x_+^2)  Y 
    \Bigr)\Bigl\vert_{\rm singlet}\,+\, O(x^5)\  
 \eeq 
 with
  \bee\label{44} 
      X    = \sum_{j=1}^k \bigl(\square_j \overline{\square}^\mu_j + \overline{\square}_j \square^\mu_j\bigr)
    + \sum_{j=1}^{k-1} \bigl(\square_j \overline{\square}_{j+1} + \overline{\square}_j \square_{j+1}\bigr)  \qquad
  {\rm and}\quad 
    Y   = \sum_{j=1}^k \overline{\square}_j \square_j \ . 
\eeq
  The notation here is as follows:  $\square_j$ denotes the character of the fundamental representation
  of the $j$th unitary group, and  $\overline{\square}_j$ that of the anti-fundamental.  To distinguish
  gauge from global (electric) flavour groups  we specify the latter  with   the symbol  of the corresponding
  fugacities $\mu$, while 
  for the gauge group  the  dependence   on the fugacities  $z$ is implicit. The entire plethystic exponent 
  can be considered as a character of  ${\cal G}\times G \times U(1)\times \mathbb{R}^+$, 
  where ${\cal G}$ is the gauge group and  $U(1)\times \mathbb{R}^+\subset  {\rm OSp}(4\vert 4)$ 
  are the superconformal symmetries  generated by 
  $J_3^H-J_3^C$ and by $\Delta +J_3$. 
    The ``singlet'' 
  operation projects on singlets of the gauge group only.

  The plethystic exponential is  a sum 
  of   powers $\mathbb{S}^k\chi$ of characters, where $\mathbb{S}^k$ is a 
  multiparticle symmetrizer that 
  takes into account fermion statistics. 
  For instance 
  \bee
  \mathbb{S}^2(a+b-c-d)= \sym^2 a    +ab+\sym^2b  -(a+b)(c+d)+  {  \Lambda^2 c + cd  +\Lambda^2d }  \   
  \eeq
  where $\sym^k$ and $\Lambda^k$ denote standard  symmetrization or antisymmetrization. 
  Call $\Omega$  the exponent in eq.\,\eqref{43}\,. 
  To the quartic order     that we care about we compute  
\begin{equation}\label{47}
  \begin{aligned}
    \mathbb{S}^2  \Omega  & = x_+^2  \sym^2 X  
    + x_+(x_-^2 - x_+^2) X Y
    + x_-^4 \sym^2 Y   + x_+^4  \Lambda^2 Y 
    - x_+^2 x_-^2 \bigl( X^2 + Y^2 \bigr) + O(x^5) \ ,
    \\
    \mathbb{S}^3  \Omega 
    & = x_+^3 \sym^3X     + x_+^2 (x_-^2 - x_+^2)  Y  \sym^2 X + O(x^5)  \ , 
    \\
    \mathbb{S}^4  \Omega 
    & = x_+^4 \sym^4 X  + O(x^5)    \ . 
  \end{aligned}
\end{equation}
Upon projection on  the gauge-invariant sector one finds  
\bee\label{48}
 X  \bigl\vert_{\rm singlet} =  XY \bigl\vert_{\rm singlet} = 0\,  
 \qquad {\rm and}  \qquad Y \bigl\vert_{\rm singlet} = k\ . 
\eeq
Second  powers of $Y$  also give ($\mu$-independent)  pure numbers, 
\[
Y^2 \Bigl\vert_{\rm singlet} = 
      \sym^2  Y \Bigl\vert_{\rm singlet}  + \,  \Lambda^2 Y \Bigl\vert_{\rm singlet}
\]
with
\bee
\sym^2  Y \Bigl\vert_{\rm singlet}  =  \,  {1\over 2} k(k+1)  + \sum_{j=1}^k  \delta_{N_j\not= 1}  \ ,
\qquad
\Lambda^2 Y \Bigl\vert_{\rm singlet}  =  \,  {1\over 2} k(k-1) \ . 
\eeq
The remaining terms in the expansion require a little  more work with the result
\begin{equation}\label{410}
  \begin{aligned}
      &X^2 \Bigl\vert_{\rm singlet} = 
    2 \, \sym^2  X  \Bigl\vert_{\rm singlet}   = 2\,\Bigl( k - 1 + \sum_{j=1}^k \overline{\square}^\mu_j \square_j^\mu
    \Bigr) \, ,  \\
        & \sym^3 X  \Bigl\vert_{\rm singlet}  = \ \sum_{j=1}^{k-1} \bigl( \square^\mu_j \overline{\square}^\mu_{j+1} + \overline{\square}^\mu_j \square^\mu_{j+1} \bigr)\ ,  \\
        & Y \sym^2X  \Bigl\vert_{\rm singlet} = k^2 + k - 2 + \delta_{N_1= 1} + \delta_{N_k= 1}
    + \sum_{j=1}^k \Bigl[ 
     (k + \delta_{N_j\not=1}) \overline{\square}^\mu_j \square_j^\mu  - 2 \delta_{N_j=1} \Bigr]\ , 
   \end{aligned}
\end{equation}  
  and finally (and most tediously)
  \begin{equation}\label{411}
  \begin{aligned}
    \sym^4 X  \Bigl\vert_{\rm singlet}   & =
   \sum_{j=2}^{k-1} \delta_{N_j\not=1}  + \sum_{j=1}^{k-1} \delta_{N_j\not=1}\delta_{N_{j+1}\not=1}
    + \frac{(k-1)k}{2}
     + \sum_{j=1}^{k-2} \bigl( \square^\mu_j \overline{\square}^\mu_{j+2} + \overline{\square}^\mu_j \square^\mu_{j+2} \bigr) \\
    & \quad
    + \sum_{j=1}^k  { \delta_{N_j\not=1}} (2-\delta_{j=1}-\delta_{j=k}) \vert\square^\mu_j\vert^2 
    + (k-1) \sum_{j=1}^k \abs{\square^\mu_j}^2
    \\
    & \quad + \sum_{j<j'}^k \abs{\square^\mu_j}^2 \abs{\square^\mu_{j'}}^2
    + { 
    \sum_{j=1}^k \abs{ \square\hskip -0.46mm \square^\mu_j}^2
    + \sum_{j=1}^k \delta_{N_j\not=1} \left\vert\, 
    \yngantisym^{\,\mu}_{\,j} \right\vert^2
    }
  \end{aligned}
\end{equation}
where in the last equation we used the shorthand $\abs{R}^2$ for  the character of $R\otimes\overline{R}$,
and denoted 
 the (anti)symmetric  representations of $U(M_j)$ by  Young diagrams.
\smallskip

Let us explain how to  compute   the singlets in $Y  \sym^2 X $.
One obtains gauge-invariant contributions to that term in three different ways: the product of a gauge-invariant from $Y$ and one from $\sym^2 X  $, or the product of an $SU(N_j)$ adjoint in $Y$ with either a fundamental and an antifundamental, or a pair of bifundamentals, coming from $\sym^2  X $.
This gives three terms:
\begin{equation}
  Y \sym^2 X  \Bigl\vert_{\rm singlet} 
  = k\biggl( k - 1 + \sum_{j=1}^k \vert \square_j^\mu\vert^2 \biggr)
  + \biggl( \sum_{j=1}^k \delta_{N_j\not= 1} \vert \square_j^\mu \vert^2 \biggr)
  + \biggl( - \delta_{N_1\not= 1} - \delta_{N_k\not=1} + \sum_{j=1}^k 2 \delta_{N_j\not= 1} \biggr)
\end{equation}
where we used that the $SU(N_j)$ adjoint is absent when $N_j=1$, and that the outermost nodes have a single bifundamental hypermultiplet rather than two. After a small rearrangement, this is the same as the  last line
of \eqref{410}.

For $\sym^4 X  \vert_{\rm singlet}$ we organized  terms according to how many bifundamentals they involve.
First, four bifundamentals can be connected in self-explanatory notation
 as \tikz\draw(0,0)--(.3,.1)--(.6,0)--(.3,-.1)--cycle; or \tikz\draw(0,0)--(.5,.2)--(0,.2)--(.5,0)--cycle; or 
$\sym^2($\tikz\draw(0,0)to[bend left=20](.5,0)to[bend left=20]cycle;$)$.
Next, two bifundamentals and two fundamentals of different gauge groups can be  connected as \tikz\draw(0,0)--(0,.3)--(.25,.25)--(.5,.3)--(.5,0);, while for  the same group they can be either connected as \tikz\draw(0,0)--(0,.2)--(.5,.25)--(-.1,.3)--(-.1,0); or \tikz\draw(0,0)--(0,.2)--(-.5,.25)--(.1,.3)--(.1,0); , or disconnected  as a pair of bifundamentals \tikz\draw(0,0)to[bend left=20](.5,0)to[bend left=20]cycle; and a flavour current \tikz\draw(0,0)--(.05,.3)--(.1,0); (see below).  
When the node is abelian  the first two  terms are already included in the third and should not
be counted separately. 
  Finally, four fundamental hypermultiplets 
   can form two pairs at different nodes, or if they  come  from the same node
  they should be split in two conjugate pairs, $Q_{j, \alpha}^p Q_{j, \beta}^r$  and 
  $\tilde Q_{j, \alpha }^{\bar p} \tilde Q_{j, \beta }^{\bar r} $, with  each pair  separately symmetrized or antisymmetrized.  
  When the  gauge group is abelian  the antisymmetric piece is   absent.

%%%%%%%%%%

\subsection{Higgs-branch chiral ring}

As  a check,  let us use the above results to calculate  the  Hilbert series of the Higgs branch. We have 
 explained in \autoref{sec:HS} that  this is equal to  the index evaluated at 
 $x_-=0$. Non-trivial monopole sectors make a contribution  proportional to $x_-^{2\Delta({\bf m})}$ 
and since   $\Delta({\bf m})>0$ they can be neglected. 
 The Higgs-branch Hilbert series therefore reads
 \bee
   \operatorname{HS}^{\rm Higgs}(x_+) =  \mathcal{Z}_{S^{2}\times S^{1}}^{{\bf m}=0} \Bigl\vert_{x_-=0}\ .  
 \eeq
   Setting  $x_-=0$ in eqs.\,\eqref{43} and \eqref{47}  we find  \smallskip
       \begin{equation} 
  \begin{aligned}
  \operatorname{HS}^{\rm Higgs}(x_+)  \, = \,  1    + \, x_+^2  \bigl( \sym^2 X  -   Y \bigr)\Bigl\vert_{\rm singlet} &
+ x_+^3    \sym^3 X \Bigl\vert_{\rm singlet}  \\   +  x_+^4   \bigl( \sym^4 X    + & \Lambda^2 Y - Y  \sym^2 X  \bigr)\Bigl\vert_{\rm singlet}
  + \, \, O(x_+^5)  \, . 
 \end{aligned}
\end{equation}
Inserting   now  \eqref{48}-\eqref{411} gives,  after some straightforward algebra in which we distinguish $k=1$ from $k>1$ because simplifications are somewhat different,
\begin{align}
  \nonumber
  \operatorname{HS}^{\rm Higgs}(x_+)
  & \overset{k=1}{=} 1 + x_+^2 \biggl(  \underbrace{\textstyle\vert \square^\mu_1\vert^2  - 1 }_{ {\rm Adj}\,G  }\biggr)
    + x_+^4 \biggr( \underbrace{\textstyle
    \abs{ \square\hskip -0.46mm \square^\mu_1}^2
    + \delta_{N_1\neq 1}
    \left\vert\, \yngantisym^{\,\mu}_{\,1} \right\vert^2
    - (1+\delta_{N_1\neq 1}) \vert  \square^\mu_1 \vert^2
    }_{\text{double-string operators}}
    \biggr) + O(x_+^5)\ ,
  \\[.5\baselineskip]
  \nonumber
  \operatorname{HS}^{\rm Higgs}(x_+)
    & \overset{k>1}{=} 1\, +\,  
    x_+^2\, \biggl(  \underbrace{\textstyle\sum_{j=1}^k \vert \square_j^\mu\vert^2  - 1 }_{ {\rm Adj}\,G  }\biggr)
    \,+\,
    x_+^3 \,   \underbrace{\textstyle
    \sum_{j=1}^{k-1} \bigl( \square^\mu_j \overline{\square}^\mu_{j+1} + \overline{\square}^\mu_j \square^\mu_{j+1} \bigr)}_{\chi_{\ell=3}:\ {\rm length=3\  strings}}
  \\
  \label{414}
  & \qquad + \, x_+^4 \, \biggl(\, \underbrace{\textstyle  \sum_{j<j'}^k \abs{\square^\mu_j}^2  
  \abs{\square^\mu_{j'}}^2
  +  
  \sum_{j=1}^k    \Bigl(  \abs{ \square\hskip -0.46mm \square^\mu_j}^2
  +   \delta_{N_j\not=1}
  \left\vert\, 
  \yngantisym^{\,\mu}_{\,j} \right\vert^2   -        \vert  \square^\mu_j \vert^2  \Bigr) }_{\text{double-string operators}}
  \\[-.3\baselineskip]
  \nonumber
  & \qquad \qquad \qquad +\, 
    \underbrace{\textstyle  \sum_{j=2}^{k-1} \Bigl(\square^\mu_{j-1} \overline{\square}^\mu_{j+1} + \overline{\square}^\mu_{j-1} \square^\mu_{j+1}
    + \delta_{N_j\neq 1} \vert \square^\mu_j   \vert^2 \Bigr)  - \, \Delta n_{\rm H}}_{\chi_{\ell=4}:\ {\rm length=4\  strings}}  \, 
    \biggr)\ + \ O(x_+^5)\ . 
\end{align}
where $\Delta n_{\rm H}$ in the last line is a pure number given by
 \bee\label{414a}
\Delta n_{\rm H} = 1 + \sum_{j=2}^{k-1} \delta_{N_j =1} - \sum_{j=1}^{k-1} \delta_{N_j =1}\delta_{N_{j+1} =1} \ .
\eeq

\vskip 3mm
This result agrees with expectations. Recall that the Higgs branch is classical and its
 Hilbert series  counts chiral
operators made out of the scalar fields,  $Q_j^p$ and $\tilde Q_j^{\bar p}$, 
of the 
  (anti)fundamental 
hypermultiplets, and the scalars  of the bifundamental 
hypermultiplets  $Q_{j, j+1}$
and $\tilde Q_{j+1,j}$ (the 
  gauge indices are here suppressed).  Gauge-invariant products of these scalar fields can be drawn as strings on the quiver diagram \cite{Bachas:2017wva}, and
   they obey the following $F$-term matrix relations  derived from the ${\cal N}=4$ superpotential, 
  \bee\label{Fterm}
     Q_{j, j+1}\tilde Q_{j+1,j} +   \tilde Q_{j ,j-1} Q_{j-1, j } +  \sum_{p,\bar p=1}^{M_j}   Q_j^p \tilde Q_j^{\bar p}
     \delta_{p  \bar p}  = 0 \qquad \forall j=1, \cdots ,  k\ . 
  \eeq
 The length of each string gives the   $SO(3)_H$ spin and  scaling dimension of the operator, 
 and hence  the power of $x_+$ in the index.  Since good theories have no free hypermultiplets there are
 no contributions at order $x_+$. At order $x_+^2$ one finds the scalar partners of the conserved flavour
 currents  that transform in the adjoint representation of $G$.  Higher powers come  either from 
 single longer strings or, starting at order $x_+^4$, from
 multistring `bound states'.   One   indeed recognizes  the  second line in \eqref{414} as  the symmetrized
 product of strings of length two, 
  \bee
 \sym^2 \AdjG  \, = \,  \sym^2  \Bigl(  \sum_{j=1}^k \vert \square_j^\mu\vert^2  - 1  \Bigr)  \ , 
 \eeq  
       modulo the fact that for abelian gauge nodes some of the states are
        absent. These and the additional   single-string operators of length 3 and 4  
  can be enumerated  by  diagrammatic rules, we refer the reader
 to \cite{Bachas:2017wva}
 for   details.

Note that single- and double-string operators with the same flavour 
 quantum numbers may mix.  The  convention adopted  in eq.\,\eqref{414} is to count   
  such operators as double strings. 
 In particular,   length~$4$ single-string operators transforming in the adjoint of the flavour symmetry group (at a   non-abelian node)
 are related by the $F$-term constraint~\eqref{Fterm} to products of currents,
 which explains the coefficient~$1$ of $\abs{\square^\mu_j}^2$ in contrast with its coefficient~$2$ in $\sym^4X|_{\text{singlet}}$ eq.\,\eqref{411}. 
 In the special case $k=1$,    all length~$4$ strings are products of currents and  some   
 vanish by the $F$-term constraint~\eqref{Fterm}. 
 Note also  that the correction term $\Delta n_{\rm H}$ is the number of disjoint parts
 of the quiver when all  abelian nodes are deleted.
 For each such part (consecutive non-abelian nodes) one neutral  length-4 operator  turns 
 out to be redundant by
 the $F$-term conditions.\footnote{Let $j_1,\dots,j_2$ be the non-abelian nodes in such a part, and focus on the case where the nodes $j_1-1$ and $j_2+1$ are abelian (the discussion is essentially identical if instead we have the edge of the quiver).  Because of the abelian node, the closed length~$4$ string \tikz\draw(0,0)--(.5,.2)--(0,.2)--(.5,0)--cycle; passing through nodes $j_1-1$ and $j_1$ factorizes as a product of currents.  On the other hand the $F$-term constraint~\eqref{Fterm} at $j_1$ expresses $Q_{j_1,j_1-1}Q_{j_1-1,j_1}$ as a sum of two terms and squaring it relates the string under consideration to a sum of three terms: a string of the same shape \tikz\draw(0,0)--(.5,.2)--(0,.2)--(.5,0)--cycle; passing through $j_1$ and $j_1+1$, a string of shape
\tikz\draw(-.05,0)--(0,.2)--(.5,.25)--(-.1,.3)--cycle; passing through these two nodes and the flavour node $M_{j_1}$, and a string \tikz\draw(0,0)--(0,.3)--(.1,0)--(.1,.3)--cycle; visiting the gauge and flavour nodes~$j_1$.  The third is a product of currents.  The first can be rewritten using the $F$-term condition of node $j_1+1$.  Continuing likewise until reaching a string of the same shape \tikz\draw(0,0)--(.5,.2)--(0,.2)--(.5,0)--cycle; passing through $j_2$ and $j_2+1$, one finally obtains the sought-after relation between many neutral length~$4$ operators and products of conserved currents.}

    The quartic term of the Hilbert series   counts marginal  Higgs-branch operators. 
  When the  electric flavour-symmetry group $G$ is large,   the  vast majority of  
  these  are  double-string operators. 
   Their  number far exceeds   the number  (dim$\,G$) 
   of moment-map constraints,  eq.\,\eqref{quotient1}, so generic $T_\rho^{\hat \rho}$ theories have 
   a large number of double-string ${\cal N}=2$ moduli. 
  
  %%%%%%%%%%%%

 \subsection{Contribution of  monopoles}

 Going back   to the full  superconformal  index,  we  separate it   in three  parts as follows  
    \bee\label{breakup}
  \mathcal{Z}_{S^{2}\times S^{1}}  =  - 1 +   \operatorname{HS}^{{\rm Higgs}}(x_+, \mu)    \,
  +\,    \operatorname{HS}^{{\rm Coulomb}}(x_-, \hat\mu)   \,  + \  \mathcal{Z}^{{\rm mixed}}(x_+, x_-, \mu, \hat\mu)
 \eeq
 where  the remainder
 $\mathcal{Z}^{{\rm mixed}}$   vanishes if either $x_-=0$ or $x_+=0$. 
 The Higgs-branch Hilbert series  only
 depends   on  the electric-flavour fugacities $\mu_{j,p}$, and  the   Hilbert series of the Coulomb branch only
 depends on the magnetic-flavour  fugacities $w_j$. 
 To render the  notation  mirror-symmetric  these latter should be redefined
 as follows 
   \bee
 w_j = \hat\mu_j \hat\mu_{j+1}^{-1}\ .  
 \eeq
 Note that since  the index \eqref{fullindexoflinearquiverQpQm} 
 only depends on ratios  of the $\hat\mu_j$, 
 the last fugacity $\hat \mu_{k+1}$ is arbitrary and can be fixed at will. 
 This  reflects the fact  that a phase rotation of all fundamental magnetic quarks
is  a gauge  rather than  global symmetry.  
\smallskip

 Mirror symmetry predicts  that    $ \operatorname{HS}^{{\rm Coulomb}}$ 
  is   given by   the same expression 
 \eqref{414}    with $x_+$ replaced by $x_-$ and all other quantities replaced by their hatted mirrors.  
 We will  assume that this is indeed  the case\footnote{It is straightforward to verify the
 assertion at the quartic order computed here. Mirror symmetry of 
 the complete index  can be proved  by induction (I. Lavdas and B. Le Floch, work in progress).} 
 and  focus   on the mixed  piece $\mathcal{Z}^{{\rm mixed}}$. 
  
 \smallskip 
  As opposed  to   the two  Hilbert series,  which only receive contributions from $B$-type primaries, 
 $\mathcal{Z}^{{\rm mixed}}$    has contributions  from both $A$-type   and $B$-type
 multiplets,  and    from both superconformal primaries and descendants.  
 Let us first collect for later reference 
  the terms of the  ${\bf m}=0$ sector that were not included in the Higgs-branch
 Hilbert series. From the results in \autoref{sec:41} one finds
        \begin{equation}\label{420m} 
  \begin{aligned}
   \mathcal{Z}_{S^{2}\times S^{1}}^{{\bf m}=0 } -  HS^{{\rm Higgs}} \ & = \, 
   \Bigl[ x_-^2 Y  \,  +\,  x_-^4 \sym^2 Y    \, +  \  x_+^2 x_-^2 
   \bigl(  Y \sym^2 X  - X^2 - Y^2 \bigr)  
   \Bigr]_{\rm singlet}\\  
   &   = \  x_-^2  k \,  +\,  x_-^4 \Bigl(    {1\over 2} k(k+1)  + \sum_{j=1}^k  \delta_{N_j\not= 1}   \Bigr)   \\ 
   &\hskip -2cm   +  \  x_+^2 x_-^2 \Bigl(   \sum_{j=1}^k  (k-1 - \delta_{N_j=1}) \vert \square_j^\mu\vert^2 
   - 2k  -
     \sum_{j=1}^k  \delta_{N_j=1}  + \delta_{N_1=1}+ \delta_{N_k=1}
   \Bigr)\ + O(x^5)\,  .  
  \end{aligned}
\end{equation}   
     The  two terms in the second line  contribute  to the Coulomb-branch Hilbert series,   
       while the third line is a  contribution  to the mixed piece.
 
  We  turn next  to  non-trivial   monopole sectors whose contributions are proportional to  $x_-^{ 2\Delta({\bf m})}$.
  At  the order of interest we can restrict ourselves  to  sectors with $0 < \Delta({\bf m})\leq 2$\,. Finding
  which monopole charges contribute  to  a generic  value of  $\Delta({\bf m})$ is a hard 
      combinatorial problem. 
      For the   lowest values $\Delta({\bf m}) = {1\over 2}, 1$\, 
      and for  good theories it was solved in  ref.\,\cite{Gaiotto:2008ak}.
      
      Fortunately this will be sufficient  for our purposes here since,  to the order of interest,  
      the sectors  $\Delta({\bf m})= 2$ and $\Delta({\bf m})= {3\over 2}$  only contribute  
        to  the Coulomb-branch
        Hilbert series, not to the mixed piece. 
                This is obvious for $\Delta({\bf m})= 2$, while for
        $\Delta({\bf m})= {3\over 2}$   subleading terms in  \eqref{fullindexoflinearquiverQpQm}  with a single
          additional power of    $q^{1/4}$ have  unmatched   gauge fugacities $z_{j, \alpha}$,  
          and   vanish after projection to the 
         invariant  sector (see below). In addition,  good theories have  no
           monopole operators  with  $\Delta({\bf m})= {1\over 2}$.  Such  
            operators would have been   free twisted hypermultiplets, and there are none    in the  spectrum
           of good theories.  This leaves us with $\Delta({\bf m})= 1$.
 
  \smallskip 
   The key concept  for describing monopole charges   is that of {\it balanced}  quiver nodes,    
  defined as the  nodes that  saturate  the `good'  inequality $N_{j-1}+N_{j+1}+M_j \geq 2N_j$. Let 
  ${\cal B}_\xi$  denote  the   sets of consecutive balanced nodes, i.e.\  the disconnected parts
  of the quiver diagram after  non-balanced nodes have been deleted. 
  As shown in \cite{Gaiotto:2008ak} each such 
  set corresponds to a non-abelian flavor group  $SU(\vert {\cal B}_\xi\vert +1)$ in the mirror magnetic 
  quiver.\footnote{As a result  $\xi$ ranges over the different components of the magnetic flavour group,
  i.e.\  the subset of   gauge nodes ($\hat\jmath = 1, \cdots \hat k$)
   in the mirror quiver of the magnetic theory for which $\hat M_{\hat\jmath}>1$.} 
   Monopole charges in   the sector
  $\Delta({\bf m})=  1$ are necessarily  of the following  form: 
  all $m_{j,\alpha}$ vanish except
  \bee\label{421}
  m_{j_1 ,\alpha_1} = m_{j_1+1,\alpha_2}= \cdots = m_{j_1+\ell,\alpha_\ell} = \pm 1\   
   \qquad  {\rm with}\quad  \  [ j_1,   j_1+\ell ]   \subseteq \   {\cal B}_\xi 
   \eeq 
  for one   choice of color    indices   at   each gauge  node,  and for one  given  set of balanced nodes, ${\cal B}_\xi$. 
  Up to permutations
  of the color indices we  can    choose  $\alpha_1 =   \alpha_2 =   \cdots  =\alpha_\ell = 1$. 
 
   Define  $j_1+\ell \equiv  j_2$, and let
  $\Gamma$ be the sequence of gauge nodes   $\Gamma = \{j_1, j_1+1, \cdots j_1+\ell \equiv j_2\}$. 
  To determine the contribution of \eqref{421} to the index,  
     we  first note that the above  assignement of
     magnetic fluxes breaks  the gauge symmetry down 
    to 
    \bee
    {\cal G}_\Gamma = \prod_{j\notin \Gamma} U(N_j) \times \prod _{j\in \Gamma} 
    \left[ U(N_j-1)\times U(1)\right] \ . 
    \eeq
  Let us pull out of the integral expression  \eqref{fullindexoflinearquiverQpQm}  
   the   fugacities $\prod_{j\in \Gamma} w_j^\pm
   $ and  the overall factor  $x_-^2$. Setting  $q=0$  everywhere else  
   and summing over equivalent permutations of color indices gives precisely 
    the   invariant measure of  ${\cal G}_\Gamma$, normalized so that it  integrates  to $1$.
   To  calculate all  terms systematically we must therefore  expand  the integrand in powers of $q^{1/4}$,  
   and  then project  on  the 
     ${\cal G}_\Gamma$ invariant sector.
 To  the order of interest we find
   \begin{equation}
 \mathcal{Z}_{S^{2}\times S^{1}}^{\Delta({\bf m})=1} =
   x_-^2   \sum_{{\cal B}_\xi }    \sum_{\Gamma \subseteq  {\cal B}_\xi } 
   \biggl( \prod_{j\in \Gamma} w_j  + \prod_{j\in \Gamma} w_j^ {-1} \biggr) \,  {\rm PE} \Bigl(  x_+  X^\prime +  (x_-^2-x_+^2) Y^\prime   - x_+ x_- Z^\prime
   \Bigr)\Biggl\vert_{{\cal G}_\Gamma \, {\rm singlet}} \hskip -6mm  +\, O(x^5)
 \end{equation}
   where 
\begin{equation}
\begin{aligned}
 X^\prime & = \sum_{j=1}^k  ( \overline{\square}^\mu_j \square_j^\prime + {\square}^\mu_j 
   \overline\square_j^\prime) + \sum_{j=1}^{k-1} ( \overline{\square}^\prime_j \square_{j+1}^\prime + {\square}^\prime_j 
   \overline\square_{j+1}^\prime) + \sum_{j, j+1\in\Gamma}  (z_{j, 1}z_{j+1, 1}^{-1} + z_{j, 1}^{-1}z_{j+1, 1}) \ , \\
   Y^\prime & =  \sum_{j=1}^k  \overline{\square}_j \square_j  =  \sum_{j=1}^k  \overline{\square}^\prime_j \square_j^\prime  +  (j_2-j_1+1)\ , \qquad
   Z^\prime = \sum_{j=1}^k \bigl( \overline{\square}_j^\prime z_{j,1} + z_{j,1}^{-1} \square_j^\prime \bigr) \ ,
\end{aligned}
\end{equation}
and in these expressions $\square_j^\prime$ denotes  the fundamental of  $U(N_j')$ where $N_j'=N_j-1$ if $j\in\Gamma$, 
and $N_j'=N_j$ if  $j\notin\Gamma$.
By convention\,  $\square_j^\prime =0$  when  $N_j^\prime =0$.
\smallskip
        
            Performing the  projection  onto ${\cal G}_\Gamma$ singlets
  gives
  \begin{equation}
    \begin{gathered}
      X^\prime\vert_{{\cal G}_\Gamma\, {\rm singlet} } = 0\, , \qquad
      Y^\prime\vert_{{\cal G}_\Gamma\, {\rm singlet} } =    (j_2-j_1+1)  + \sum_{j=1}^k  \delta_{N_j^\prime \not=0} \,, \qquad
      Z^\prime\vert_{{\cal G}_\Gamma\, {\rm singlet} } = 0\ ,
      \\
      \text{and}\quad
      \sym^2 X^{\prime } \vert_{{\cal G}_\Gamma \, {\rm singlet}} \, =\, \,  \sum_{j=1}^k \delta_{N_j^\prime \not=0}
    \,  \overline{\square}^\mu_j   \square^\mu_j       
    +    \sum_{j=1}^{k-1}  \delta_{N_j^\prime\not=0}\delta_{N_{j+1}^\prime\not=0}  + (j_2-j_1) \ . \qquad
  \end{gathered}
\end{equation}
  Collecting  and rearranging  terms  gives  
     \begin{equation}\label{426m} 
  \begin{aligned} 
    &    \mathcal{Z}_{S^{2}\times S^{1}}^{\Delta({\bf m})=1} \     =\     \sum_{{\cal B}_\xi }     \sum_{[j_1, j_2]\subseteq
       {\cal B}_\xi } 
     (\hat\mu_{j_1} \hat\mu_{j_2+1}^{-1}   + \hat\mu_{j_1}^{-1} \hat\mu_{j_2+1}) \Biggl[
     x_-^2 \, +\,  x_-^4  \Bigl(k + \sum_{j\in \Gamma}  \delta_{N_j\not=1}\Bigr)
     \\
     & \qquad\quad  +    x_-^2 x_+^2\, \Bigl(\sum_{j=1}^k  \delta_{N_j^\prime \not=0}\,  \overline{\square}^\mu_j \square^\mu_j   +  \sum_{j=1}^{k-1}  \delta_{N_j^\prime\not=0}\delta_{N_{j+1}^\prime\not=0} -
     \sum_{j=1}^k  \delta_{N_j^\prime \not=0} -  1 
      \Bigr) \Biggr] \ + \ O(x ^5)\ . 
     \end{aligned}
\end{equation}         
  
     The terms that do not vanish for $x_+=0$ are contributions to the Hilbert series of the Coulomb branch.
   For a check let us consider the leading  term.   Combining it with the one from  
   eq.\,\eqref{420m} gives the adjoint representation of $\hat G$, as predicted by mirror symmetry
    \begin{equation} 
  \begin{aligned} 
             \operatorname{HS}^{{\rm Coulomb}} \,   =\,     1 + x_-^2\,  
            \Bigl[   \underbrace{   k +   \sum_{{\cal B}_\xi }    \sum_{j_1\leq j_2\in  {\cal B}_\xi } 
     (\hat\mu_{j_1} \hat\mu_{j_2+1}^{-1}   + \hat\mu_{j_1}^{-1} \hat\mu_{j_2+1}) 
      }_{{\rm Adj}\,\hat G} \Bigr] \,    +  \,   O(x_-^3)
    \  .      
               \end{aligned}
\end{equation}    
   Note that the  $k$ Cartan generators of $\hat G$  (those corresponding  to the 
   topological symmetry) contribute to the  index 
   in  the ${\bf m}=0$ sector. The monopole operators  that enhance  this symmetry  in the infrared
   to the full non-abelian magnetic  group    enter in the sector $\Delta({\bf m})=1$.

%%%%%%%%%%%%%%

  \subsection{The mixed term}\label{44a}
      
     Let us now put together  the  mixed terms  from eqs.\,\eqref{420m} and \eqref{426m}.
      If   the quiver has no abelian nodes  all $N_j>1$
     and all $N_j^\prime >0$,  and our    expressions  simplify enormously. The last line in  eq.\,\eqref{420m}
     collapses to $(k-1) \sum_j \vert \square^\mu_j\vert^2 - {2 k   }$,   
     and the last line of  \eqref{426m} collapses 
     to $\sum_j \vert \square^\mu_j\vert^2 - 2$. Combining  the two  gives 
     the following result  for quivers with

     \vspace{.6\baselineskip}

     \noindent\underline{\bf No abelian nodes}:\vspace{-.6\baselineskip}
        \begin{equation}\label{mixs} 
  \begin{aligned} 
    \mathcal{Z}^{{\rm mixed}}  \      =\ &   x_+^2 x_-^2
      \Bigl[   \Bigl(\sum_j \vert \square^\mu_j\vert^2 - 2\Bigr)   \Bigl( k - 1 
     +  \sum_{{\cal B}_\xi }  \sum_{ \epsilon =\pm}   \sum_{[j_1, j_2]\subseteq
       {\cal B}_\xi }  (\hat\mu_{j_1} \hat\mu_{j_2+1}^{-1})^{\epsilon}\Bigr) - 2 \Bigr] \,  + \, O(x^5)   
              \\       &
   =  x_+^2 x_-^2\,  \Bigl[   (\AdjG  - 1)( \chi_{{\rm Adj}\,\hat G} - 1) - 2  \Bigr] \,  + \, O(x^5) \  . 
       \end{aligned}
\end{equation}      
   We will interpret this result in the following section. But first let us consider the
   corrections coming from  abelian nodes.  
    
  \vskip 0.6mm

     The $\mu$-dependent correction in the ${\bf m}=0$ sector, 
     eq.\,\eqref{420m},  is  a  sum of  $\vert \square^\mu_j\vert^2$   over all abelian gauge nodes,
    which  should be subtracted from the above result. 
   We expect, by mirror symmetry,  a similar subtraction   for  abelian gauge nodes of the 
   magnetic quiver. 
    To see how this comes about  note first  that  $N_j'=0$ in \eqref{426m}
    implies that  $j$ is an abelian  balanced node in 
    $ \Gamma = [j_1,j_2]\subseteq {\cal B}_\xi$. Now an  abelian balanced node has exactly two fundamental
    hypermultiplets, so it is necessarily one of the following four types:
    \vskip 7mm

  \begin{center}
   \begin{tikzpicture}
   \begin{scope}
    \node(Ndots) at (-1,0) {\scriptsize ${\cdots}$};
     \node(N1) [circle,draw] at (-.3,0) {\scriptsize $1$};
    \node(N2) [circle,draw,red] at (1,0) {\color{red} {\scriptsize $1$}};
    \node(Nk) [circle,draw] at (2.4,0) {\scriptsize $1$};
       \node(Ndots) at (3.0,0) {\scriptsize ${\cdots}$};
    \draw (N1) -- (N2) --  (Nk)  ;
    \node at (1.1,-1.8){(a)};
  \end{scope}
    \begin{scope}[xshift={5cm}]
    \node(N1)   [circle,draw,red] at (-.3,0)  {\color{red}   {\scriptsize $1$}  }  ;
    \node(N2) [circle,draw] at (1,0) {\scriptsize $1$};
    \node(Ndots) at (1.6,0) {\scriptsize ${\cdots}$};
    \node(M1) [rectangle,draw] at (-.3,-1) {\scriptsize $1$};
    \draw (N1) -- (N2) ; 
    \draw (M1) -- (N1);
    \node at (0.5,-1.8){(b)};
  \end{scope}
    \begin{scope}[xshift={9cm}]
    \node(N1)   [circle,draw,red] at (-.3,0)  {\color{red}   {\scriptsize $1$}  }  ;
    \node(N2) [circle,draw] at (1,0) {\scriptsize $2$};
    \node(Ndots) at (1.6,0) {\scriptsize ${\cdots}$};
    \draw (N1) -- (N2) ; 
    \node at ( .4,-1.8){(c)};
  \end{scope}
    \begin{scope}[xshift={13.3cm}]
    \node(N1)   [circle,draw,red] at (-.3,0)  {\color{red}   {\scriptsize $1$}  }  ;
     \node(M1) [rectangle,draw] at (-.3,-1) {\scriptsize $2$};
     \draw (M1) -- (N1);
    \node at (-0.3,-1.8){(d)};
  \end{scope}
  \end{tikzpicture}
\end{center} 
The balanced node is drawn in red, and the dots indicate that the [good] quiver extends 
beyond the piece shown in the figure, with extra flavour and/or gauge nodes.  The   set ${\cal B}_\xi$ 
may contain several balanced nodes,
as many as the rank of the corresponding non-abelian factor of the magnetic-flavour symmetry. 
Notice however that abelian nodes of type (c) cannot coexist in the same ${\cal B}_\xi$ with abelian nodes
of the other types.  So we split the calculation of the $\Delta({\bf m})=1$ sector according to whether
${\cal B}_\xi$ contains abelian nodes of type  (a) and/or (b),  or  nodes of type  (c). The case (d) corresponds
to a single theory called   $T[SU(2)]$ and will be treated separately. 
\smallskip

   Replacing   
    $\delta_{N_j'\not=0} $ by $ 1 - \delta_{N_j' =0}$ in the last line of \eqref{426m}  and doing the straightforward
    algebra  leads to  the following result 
for the $x_+^2x_-^2$ piece: 
\begin{equation}\label{429c}
     \begin{aligned}       
      \sum_{j=1}^k \delta_{N_j'\not=0} \overline{\square}^\mu_j \square^\mu_j 
      +  \sum_{j=1}^{k-1}  \delta_{N_j^\prime\not=0}\delta_{N_{j+1}^\prime\not=0} -
     \sum_{j=1}^k  \delta_{N_j^\prime \not=0} - 1\   
       = \   \sum_{j=1}^k \overline{\square}^\mu_j \square^\mu_j  -  2 -
       \#\bigl\{ {\cal B}_\xi \bigm| \text{types (a)\&(b)} \bigr\}
       % \begin{cases}    1  \ \ &{\rm (a)+(b)}  \\
       %  0\ \ &({\rm c}) \end{cases}
              \end{aligned}
\end{equation}      
    The first two terms on the right-hand  side  were already accounted for  in
    \eqref{mixs}. 
   The extra subtraction vanishes for each ${\cal B}_\xi$ of type (c),
   and equals $-1$ for each ${\cal B}_\xi$ whose nodes are of type (a) and/or (b).
   This is  precisely what one   expects from mirror symmetry.
    Indeed, as shown in \autoref{sec:A}, the  two cases  in eq.\,\eqref{429c}
    correspond to the $\hat M_{\xi} = \vert {\cal B}_\xi\vert + 1$ 
   magnetic flavours  being charged under a  non-abelian, respectively abelian gauge group 
   in the magnetic quiver ($\hat N_{\xi} > 1$,  respectively  $\hat N_{\xi} =1$).  
   In the first case there is no correction to \eqref{mixs},  while in the second  summing over all 
   monopole-charge assignements  in   ${\cal B}_\xi$ reconstructs,  up to a fugacity-independent term
    equal to the rank, 
   the  adjoint character of the non-abelian magnetic-flavour symmetry.
   \smallskip
   
             Putting  everything together   we finally get the following for all linear quivers except $T[SU(2)]$.

             \vspace{.6\baselineskip}
             
             \noindent\underline{\bf Arbitrary quivers except \boldmath$T[SU(2)]$}:\vspace{-.6\baselineskip}
             \begin{equation} \label{430}  
  \begin{aligned}
    \mathcal{Z}^{{\rm mixed}} \    
   & =  x_+^2 x_-^2\,  \Bigl[   \bigl(\AdjG (\mu)  - 1\bigr)\bigl( 
   \chi_{{\rm Adj}\,\hat G}(\hat\mu) - 1\bigr) - 2 \\
    & 
    \qquad\qquad\quad
     - \sum_{j\mid N_j=1} \abs{\square_j^\mu}^2 - \sum_{\hat\jmath\mid \hat N_{\hat\jmath}=1} \abs{\square_{\hat\jmath}^{\hat\mu}}^2  +  \Delta n_{\rm mixed}  \Bigr] \,  + \, O(x^5) \  ,
    \end{aligned}
\end{equation}      
  where the  fugacity-independent correction  reads
\bee\label{430a} 
\Delta n_{\rm mixed} =    \sum_{\hat\jmath\mid \hat N_{\hat\jmath}=1}   \hskip -1mm  \hat M_{\hat\jmath} \ + 
     \delta_{N_1=1} + \delta_{N_k=1}
    - \sum_{j=1}^k \delta_{N_j=1}
   \ . 
 \eeq
 We show in \autoref{lem:Deltanmixed} that $\Delta n_{\rm mixed}$ is  
 (like the rest of the expression)  mirror symmetric, albeit
 not manifestly so. 
 
    For completeness we  give finally   the result  for  $T[SU(2)]$, the  theory  described by the quiver (d).   
    This is a self-dual  abelian theory   with  global  symmetry   $SU(2)\times {S
    {\widehat U(2})}$.   In  self-explanatory  notation  the result for  this case   reads

  \vspace{.6\baselineskip}
  
  \noindent\underline{\bf\boldmath$T[SU(2)]$}:\vspace{-1.1\baselineskip}
 \bee\label{432}
  \mathcal{Z}^{{\rm mixed}}  \    
   =  x_+^2 x_-^2\,  (-3 - \mu - \mu^{-1} - \hat\mu - \hat\mu^{-1})  \,  + \, O(x^5)  \ . 
 \eeq
It turns out that  for this theory the full superconformal index can be expressed  in closed form,  in terms of the
$q$-hypergeometric function. This  renders manifest a general property of the  index,  
its factorization in holomorphic blocks~\cite{Pasquetti:2011fj,Dimofte:2011py,Beem:2012mb}. Since we are
not using this feature in our  paper,   
 the calculation is relegated to  \autoref{app:C}. 
\smallskip

This completes our calculation of the mixed quartic terms of  the superconformal  index.   
We will next rewrite the index  as a sum of characters of  $OSp(4\vert 4)$  and interpret the result.

 %%%%%%%%%%%%%%%%%%% 
     
   \section{Counting   the \texorpdfstring{${\cal N}=2$}{N=2}   moduli}\label{sec:5}
 
 The  full superconformal index up to order $O(q) \sim O(x^4)$  is given by   \eqref{breakup}
 together with  expressions \,\eqref{414}-\eqref{414a} for the Higgs branch Hilbert series, their
 mirrors  for the Coulomb branch Hilbert series, and expressions \eqref{430}-\eqref{430a} for the mixed term.
 Collecting  everything   and using also  
 \eqref{indexlist} for the indices  of individual representations of the superconformal algebra
 $OSp(4\vert 4)$ leads  to the main result of this paper

\begin{equation}\label{5c}
     \begin{aligned}     
\mathcal{Z}_{S^{2}\times S^{1}}  =&  1 + \,    \underbrace{ x_+^2(1  - x_-^2)}_{{\cal I}_{B_1[0]^{(1,0)}}}\,
 \AdjG \,  +  \,    \underbrace{ x_-^2(1  - x_+^2)}_{{\cal I}_{B_1[0]^{(0,1)}}}\,
 \chi_{{\rm Adj}\,\hat G} 
 +\hskip -2mm \underbrace{ x_+^3}_{{\cal I}_{B_1[0]^{({3/ 2},0)}}}\hskip -3mm \chi_{\ell=3}^{} 
 \,+\hskip -2mm
 \underbrace{ x_-^3 }_{{\cal I}_{B_1[0]^{(0,{3/ 2} )}}} \hskip -3mm \hat\chi_{\ell=3}^{}
 \\[-1pt]
 &  
 \ +   \underbrace{ x_+^4 }_{{\cal I}_{B_1[0]^{(2,0)}}}\hskip -2mm
   \boxed{ \bigl(\sym^2  \AdjG  + \chi_{\ell=4}^{}
   -  \Delta\chi^{(2,0)}\bigr) }  \,+ 
  \underbrace{ x_-^4 }_{{\cal I}_{B_1[0]^{(0,2)}}} \hskip -2mm
   \boxed{ \bigl(\sym^2  \chi_{{\rm Adj}\,\hat G} + 
   \hat \chi_{\ell=4}^{} - \Delta\chi^{(0, 2)}\bigr)}
  \\[-1pt]
  &\qquad\qquad +\,  \underbrace{x_+^2x_-^2 }_{{\cal I}_{B_1[0]^{(1,1)}}} 
  \boxed{ \bigl( \AdjG \, 
  \chi_{{\rm Adj}\,\hat G}  -  \Delta\chi^{(1,1)} \bigr)  }
  \, + \,   \underbrace{(-x_+^2x_-^2)}_{{\cal I}_{A_2[0]^{(0,0)}}} \
  \, +\ O(x^5)\ . 
   \end{aligned}
\end{equation} 
where $\chi_{\ell=n}^{}$ 
counts independent single strings of length $n=3,4$ on the electric quiver, as in~\eqref{414},\footnote{Some quivers such as $T[SU(N)]$ have $\chi_{\ell=4}<0$: double-string operators then obey extra $F$-term relations.}
\begin{equation}
  \begin{aligned}
    \chi_{\ell=3} & = \sum_{j=1}^{k-1} \bigl( \square^\mu_j \overline{\square}^\mu_{j+1} + \overline{\square}^\mu_j \square^\mu_{j+1} \bigr) , \\
    \chi_{\ell=4} & = \begin{cases}
      0 \quad \text{for } k = 1 \text{, and otherwise} \\
      \displaystyle \sum_{j=2}^{k-1} \Bigl(\square^\mu_{j-1} \overline{\square}^\mu_{j+1} + \overline{\square}^\mu_{j-1} \square^\mu_{j+1}
      + \delta_{N_j\neq 1} \vert \square^\mu_j   \vert^2 \Bigr)  - 1 - \sum_{j=2}^{k-1} \delta_{N_j =1} + \sum_{j=1}^{k-1} \delta_{N_j =N_{j+1} =1} ,
    \end{cases}
  \end{aligned}
\end{equation}
and $\hat \chi_{\ell=n}^{}$  counts likewise single strings  on the magnetic  quiver, while the correction
terms coming from  abelian (electric and magnetic) gauge nodes are given by
 \begin{equation}\label{5cc}
     \begin{aligned}  
&\Delta\chi^{(2,0)} = \delta_{k=1}\delta_{N_1\neq 1}\abs{\square_1^\mu}^2 + \sum_{j=1}^k  \delta_{N_j =1}
   \bigl|\, {\yngantisym^{\,\mu}_{\,j}} \bigr|^2\ , \qquad
      \Delta\chi^{(0,2 )} = \delta_{\hat k=1}\delta_{\hat N_1\neq 1}\abs{\square_1^{\hat\mu}}^2 + \sum_{\hat\jmath=1}^{\hat k}  \delta_{\hat N_{\hat\jmath} =1}
   \bigl|\, {\yngantisym^{\,\hat \mu}_{\,\hat\jmath}} \bigr|^2\ , \\ 
    &\quad {\rm and}\quad 
    \Delta\chi^{(1,1)}  =
    \begin{cases}
      \AdjG \, \chi_{{\rm Adj}\,\hat G} \quad \text{for } T[SU(2)] \text{, and otherwise}
      \\[2pt]
      \displaystyle \sum_{j\mid N_j=1} \abs{\square_j^\mu}^2 +  \sum_{\hat\jmath\mid \hat N_{\hat\jmath}=1} \abs{\square_{\hat\jmath}^{\hat\mu}}^2  -  \Delta n_{\rm mixed}
    \end{cases}
  \end{aligned}
\end{equation} 
with $\Delta n_{\rm mixed}$ defined in~\eqref{430a}.  Notice that we  have used in eq.\,\eqref{5c}  the fact that the SCFT has a unique energy-momentum tensor which is
part of the $ {A_2[0]^{(0,0)}}$ multiplet, and that all the other $OSp(4\vert 4)$
multiplets  can be unambiguously identified at  this order.

Finally we calculate the dimension~\eqref{dimMSCgeneral} of the conformal manifold as the number of marginal scalar operators minus the number of conserved currents with which they recombine:
\begin{equation}
  \begin{aligned}
    \dim_{\mathbb{C}} {\cal M}_{\rm SC}
    & = \bigl[
    \bigl(\sym^2  \AdjG  + \chi_{\ell=4}^{} -  \Delta\chi^{(2,0)}\bigr)
    + \bigl(\sym^2  \chi_{{\rm Adj}\,\hat G} + \hat \chi_{\ell=4}^{} - \Delta\chi^{(0, 2)}\bigr)
    \\
    & \quad + \bigl(\AdjG \, \chi_{{\rm Adj}\,\hat G}  -  \Delta\chi^{(1,1)}\bigr)
    - \AdjG - \chi_{{\rm Adj}\,\hat G} - 1 \bigr]_{\mu=\hat\mu=1},
  \end{aligned}
\end{equation}
where the three parenthesized expressions count electric, magnetic, and mixed marginal scalars while the subtracted terms correspond to the flavour symmetry $G\times\hat G\times U(1)$ of the theory.

%%%%%%%%%%%

\subsection{Examples and interpretation}\label{exs}

       The marginal ${\cal N}=2$ deformations (exactly marginal or not) are the terms  enclosed in boxes in~\eqref{5c}. 
       Those  in the second line   are 
       standard quartic superpotentials involving only  the ${\cal N}=4$ hypermultiplets 
       of the electric quiver, or only
       their  twisted cousins of the magnetic quiver. The electric superpotentials (counted in 
         the Higgs-branch Hilberts series)  are of two kinds:  (i) single strings of length 4 that transform
       in the adjoint of each gauge-group factor $U(M_j)$, or in the bifundamental of next-to-nearest
       neighbour flavour groups  $U(M_j)\times U(M_{j+2})$;  and (ii)  double-string operators in the
       $\sym^2({\rm Adj}\,G)$ representation. 
       If there are abelian gauge nodes or $k=1$ some of these operators are  absent. The same statements of course
       hold for   magnetic superpotentials and the mirror quiver.

       The more interesting  deformations, the ones made  out of  both types of hypermultiplets, 
       are    in the third line of \eqref{5c}.   For  quivers with no  abelian nodes,  these  
       mixed operators  are 
        all possible    $\vert {\rm Adj}\,G\vert \times \vert {\rm Adj}\,\hat G\vert$  
        gauge-invariant products  of two fundamental hypermultiplets  and two fundamental twisted 
        hypermultiplets\footnote{More
 precisely, all but those involving the overall combination  $\sum_j \sum_{p,\bar p}Q_j^p \tilde Q_j^{\bar p}\delta_{p \bar p}$  or its mirror. These  are the  scalar
 partners of the two missing  $U(1)$ flavour symmetries that are gauged.}
      \bee\label{53}
     {\cal O}_{j ;  \hat\jmath}^{(\bar p, r ; \bar{\hat p}, \hat r)} \, =\,  (\tilde Q_j^{\bar p}  Q_{j}^r)  \bigl(\widetilde {\hat 
     Q}_{\hat\jmath}^{\bar{\hat p}}  { \hat Q}_{\hat\jmath}^{\hat r} \bigr)  \ ,  
      \eeq
 where hats denote  the scalars of the  (twisted) hypermultiplets.

  Some of the above  operators can  be  identified with  superpotential deformations  
    involving   both  hypermultiplets and  vector multiplets. Consider,  in particular, 
   the  following  gauge-invariant chiral operators of the electric theory
     \bee\label{54}
      {\cal O}_{j ,  j^\prime}^{(\bar p, r)}\ =\  (\tilde Q_j^{\bar p}  Q_{j}^r)\,\Tr(\Phi_{j^\prime})\ , 
      \eeq 
 where $\Phi_j$ is  the ${\cal N}=2$ chiral field  in   the ${\cal N}=4$ vector multiplet  at  the $j$th 
 gauge-group   node.  
 It can be easily shown  that $\Tr(\Phi_{j })$ 
  is the  scalar superpartner  of  the $j$th topological  $U(1)$ current, 
  so that the   operators  \eqref{54}  are  
 the same as the operators \eqref{53} when these latter are  restricted   to  the Cartan subalgebra of $\hat G$.
 Similarly, projecting \eqref{53} onto the Cartan subalgebra of $G$ gives mixed superpotential deformations
 of the magnetic Lagrangian.  
     The remaining   
  $(\vert {\rm Adj}\,G\vert  -   {\rm rank}G ) \times (\vert {\rm Adj}\,\hat G\vert  -   {\rm rank}\hat G )$  
  deformations involve both charged hypermultiplets and  monopole operators 
  and have a priori no Lagrangian description.

     We can also understand why some mixed  operators are absent when the quiver has abelian nodes.
  Recall that the
   ${\cal N}=4$ superpotential   reads     
   \bee
 W = \sum_{j=1}^k   \Bigl(  
 Q_{j, j-1}\Phi_j \tilde Q_{j, j-1} + \tilde Q_{j, j+1}\,\Phi_j   Q_{j, j+1} +  \sum_{p,\bar p= 1}^{M_j} \tilde Q_{j}^{\bar p} 
 \,\Phi_j    Q_{j }^p \delta_{p\bar p} 
 \Bigr) \ , 
 \eeq
   from which one derives the following  $F$-term conditions : 
    $\tilde Q_{j}^{\bar p} \,\Phi_j = \Phi_j    Q_{j }^p =0$ for all $j, p$ and  $\bar p$. 
 Note that   $\Phi_j $ is an $N_j\times N_j$ matrix, while  $\tilde Q_{j}^{\bar p}$ and $Q_{j }^p =0$ are bra and ket
 vectors. If  (and only if)  $j$ is an abelian node, these conditions imply 
  ${\cal O}_{j ,  j}^{(\bar p, r)} = 0$ so that these operators should be subtracted. 
  This explains
 the first of the three  terms in the subtraction   $\Delta\chi^{(1,1)}$, eq.\,\eqref{5cc}. 
 The second  is likewise explained by 
  the $F$-term conditions at  abelian nodes of the magnetic quiver.  Finally  $\Delta n_{\rm mixed}$
  corrects  some  overcounting in  these abelian-node  subtractions. 

    We should stress  that the  factorization of  
    mixed marginal deformations $B_1[0]^{(1,1)}$ in terms of electric and magnetic chiral 
     multiplets need  not be a general property    
    of  all 3d ${\cal N}=4$ theories.
  As a   counterexample [that does not come from a brane construction]  consider  the $SU(3)$ gauge theory with $M_1$ hypermultiplets in the fundamental representation and $M_2$ in its symmetric square.
  This is  a good theory  for $M_1+3M_2\geq 5$ (in particular 
  $\Delta({\bf m})\geq 1$ for ${\bf m}\neq 0$).  For $M_1+3M_2\geq 6$ it has no magnetic flavour symmetry, yet there are mixed marginal deformations in the $M_1\overline{M_2}+\overline{M_1}M_2$ representation of the electric flavour symmetry $U(M_1)\times U(M_2)$.  
 Even in  3d ${\cal N}=4$ theories that do arise from brane constructions,  
  complicated $(p,q)$-string webs with both F-string and D-string open ends, corresponding to   $B_1[0]^{(j^H,j^C)}$ multiplets, need not  factorize into F-string and D-string parts.  However, we expect this failure   to  appear if at all  at large $j^H,j^C$.

  \smallskip
  
  We may summarize the discussion   as follows:
 \medskip
 
      \fbox{%
    \parbox{13.8cm}{%
         Marginal  chiral operators of $T_\rho^{\hat\rho}[SU(N)]$ 
           transform  in the $\sym^2({\rm Adj}\,G + {\rm Adj}\,\hat G)$
 representation of the electric and magnetic  flavour symmetry, plus  
 strings of length 4  (in either adjoints or bifundamentals of individual factors),   
 modulo redundancies for  quivers with  abelian nodes  and in the special cases
 $k=1$ or $\hat k=1$.}%
}
 
 \bigskip
 
 Note that  the above  logic could be extended to 
    chiral operators of  arbitrary dimension  $\Delta = n$.   
     Operator overcounting    arises,  however,  in this case   at  electric or magnetic
     gauge nodes of  rank $\leq  n-1$,  making the combinatorial problem considerably harder. 
 \bigskip
 
  We  now  illustrate these results with selected  examples:   
 \medskip\smallskip
 
 \textbf{sQCD}$_{\mathbf{3}}$: The electric theory has gauge group  $U(N_c)$  with $N_c\geq 2$,  and 
   $N_f \geq 2N_c$ fundamental flavours. Its electric and magnetic quivers are  drawn
      below.  The magnetic quiver  with   $N_f =2N_c$   (upper right figure)  differs
 from the one  for $N_f >2N_c$   (lower right  figure). 
  Both have  $N_f-1$ balanced nodes, corresponding to the
  electric  $SU(N_f)$ flavour symmetry, but their magnetic symmetry is,  respectively,   $SU(2)$ and  $U(1)$:
 
 \begin{center}
   \begin{tikzpicture}
   \begin{scope}
    \node(N1)   [circle,draw] at (-.5,-.8 )  {  {\scriptsize $N_c$}  }  ;
     \node(M1) [rectangle,draw] at (-.5,-1.8 ) {\scriptsize $N_f$};
     \draw (M1) -- (N1);
  \end{scope}
    \begin{scope}
    \node(N4) [circle,draw] at (4,0) {\scriptsize $1$};
    \node(N5) [circle,draw] at (5,0) {\scriptsize $2$};
    \node(Ndots)  at (6,0) {\scriptsize ${\cdots}$};
    \node(N6) [circle,draw] at (7,0) {\scriptsize $N_c$};
    \node(Ndotss)  at (8,0) {\scriptsize ${\cdots}$};
    \node(N7) [circle,draw] at (9,0) {\scriptsize $2$};
    \node(N8) [circle,draw] at (10,0) {\scriptsize $1$};
    \node(M6) [rectangle,draw] at (7,-1) {\scriptsize $2$};
        \draw (N4) -- (N5) ;
     \draw (N5) -- (Ndots) ;
     \draw (Ndots) -- (N6);
     \draw (N6) -- (Ndotss);
     \draw (Ndotss) -- (N7);
     \draw (N7) -- (N8);
    \draw (M6) -- (N6);  
  \end{scope}
     \begin{scope}[yshift=-.6cm]
     \node(N4) [circle,draw] at (3,-2.3) {\scriptsize $1$};
    \node(N5) [circle,draw] at (4,-2.3) {\scriptsize $2$};
    \node(Ndots)  at (5,-2.3) {\scriptsize ${\cdots}$};
    \node(N6) [circle,draw] at (6. ,-2.3) {\scriptsize $  N_c $};
    \node(Ndotsmiddle) at (7,-2.3) {\scriptsize ${\cdots}$};
    \node(N6a) [circle,draw] at (8,-2.3) {\scriptsize $ N_c $};

    \node(Ndotss)  at (9,-2.3) {\scriptsize ${\cdots}$};
    \node(N7) [circle,draw] at (10,-2.3) {\scriptsize $2$};
    \node(N8) [circle,draw] at (11,-2.3) {\scriptsize $1$};
    \node(M6) [rectangle,draw] at (6. ,-3.3) {\scriptsize $1$};
     \node(M6a) [rectangle,draw] at (8,-3.3) {\scriptsize $1$};
        \draw (N4) -- (N5) ;
     \draw (N5) -- (Ndots) ;
     \draw (Ndots) -- (N6);
     \draw(N6) -- (Ndotsmiddle) ;
     \draw(Ndotsmiddle) -- (N6a) ;
     \draw (N6a) -- (Ndotss);
     \draw (Ndotss) -- (N7);
     \draw (N7) -- (N8);
    \draw (M6) -- (N6);
      \draw (M6a) -- (N6a);   
     \draw [decorate,decoration={brace,amplitude=6pt},yshift=4pt]
          (5.7,-2) -- (8.3, -2) node [midway,yshift=10pt] {\scriptsize $N_f-2N_c+1$};
      \end{scope}
   \end{tikzpicture}
  \end{center}
 
 \smallskip
\noindent  The $N_f>2N_c$ theories have    ${1\over 2} N_f^2 (N_f^2-3)$ electric,  one
  magnetic,   and  $(N_f^2-1)$ mixed marginal operators from 2-string states.
  For $N_f=2N_c+2$ there are three extra marginal operators from length~$4$ magnetic strings, while for other $N_f$ there is only one.
  There are none from length~$4$ electric strings,
  and no abelian-node redundancies.   The number of D-term conditions is  $N_f^2 +1$,  so that  the
  complex  dimension
  of the superconformal manifold is  
  $\dim{\cal M}_{SC} ={1\over 2} N_f^2 (N_f^2-3)$ if $N_f\neq 2N_c+2$, and $\dim{\cal M}_{SC} ={1\over 2} N_f^2 (N_f^2-3) + 2$ otherwise.
  When  $N_f = 2N_c$   the number of electric operators is the same,  but 
  there are  now  six 2-string magnetic operators,  $ 3(N_f^2-1) $  mixed   ones, three  length-4 strings,  and 
 $N_f^2 +3$  D-term conditions, hence $\dim{\cal M}_{SC} ={1\over 2} N_f^2 (N_f^2+1)+3$.

  \medskip\smallskip
 
 ${\mathbf{sQED_3}}$: This is a $U(1)$ theory with $N_f> 2$ charged hypermultiplets. The magnetic quiver
 has $N_f-1$ abelian balanced nodes and one charged hypermultiplet at each  end of the chain:
 \medskip
 \begin{center}
   \begin{tikzpicture}
   \begin{scope}
    \node(N1)   [circle,draw] at (-.3,0)  {  {\scriptsize $1$}  }  ;
     \node(M1) [rectangle,draw] at (-.3,-1) {\scriptsize $N_f$};
     \draw (M1) -- (N1);
  \end{scope}
    \begin{scope}
        \node(M1) [rectangle,draw] at (4,-1) {\scriptsize $1$};
     \node(N4) [circle,draw] at (4,0) {\scriptsize $1$};
    \node(N5) [circle,draw] at (5,0) {\scriptsize $1$};
    \node(Ndots)  at (6,0) {\scriptsize ${\cdots}$};
    \node(N6) [circle,draw] at (7,0) {\scriptsize $1$};
    \node(Ndotss)  at (8,0) {\scriptsize ${\cdots}$};
    \node(N7) [circle,draw] at (9,0) {\scriptsize $1$};
    \node(N8) [circle,draw] at (10,0) {\scriptsize $1$};
     \node(M2) [rectangle,draw] at (10,-1) {\scriptsize $1$};
        \draw (N4) -- (N5) ;
     \draw (N5) -- (Ndots) ;
     \draw (Ndots) -- (N6);
     \draw (N6) -- (Ndotss);
     \draw (Ndotss) -- (N7);
     \draw (N7) -- (N8);
    \draw (M1) -- (N4);
    \draw (M2) -- (N8);
  \end{scope}
   \end{tikzpicture}
  \end{center}
\smallskip

 \noindent   This theory has ${1\over 4} N_f^2 (N_f+1)^2-N_f^2$ marginal electric operators
 (because   the antisymmetric combination $Q^{[p}Q^{r]}$ vanishes),   one magnetic operator, and
 no  mixed ones. To prove  this latter assertion one computes $\Delta n_{\rm mixed} = 3$
 from eq.\,\eqref{430a} [checking in passing that the expression  is mirror symmetric]. 
 In the special case  $N_f=4$ there is in addition two length-4 magnetic strings. Note that   for $N_f\gg 1$
  the dimension of the  superconformal manifold of sQED$_3$   is reduced by 
  a factor two compared to the superconformal manifold  of  sQCD$_3$.

  \medskip\smallskip
 
 ${\mathbf{T[SU(N)]}}$:  This theory is defined by 
  the self-dual fully-balanced quiver  shown below. 
 
  \begin{center}
   \begin{tikzpicture}
    \node(N1) [circle,draw] at (0,0) {\scriptsize $1$};
    \node(N2) [circle,draw] at (1,0) {\scriptsize $2$};
    \node(Ndots)  at (2,0) {\scriptsize ${\cdots}$};
    \node(N3) [circle,draw] at (3,0) {\tiny $\!\!N{-}1\!\!$};
    \node(M3) [rectangle,draw] at (4.2,0) {\scriptsize $N$};
        \draw (N1) -- (N2) ;
     \draw (N2) -- (Ndots) ;
     \draw (Ndots) -- (N3);
    \draw (N3) -- (M3);
  \end{tikzpicture}
  \end{center}
 
 \noindent For $N\geq 3$ there are  ${1\over 2}N^2(N^2-1)-1$ electric operators, as many  magnetic
 operators,  and $(N^2-1)^2$ mixed ones.  The dimension of the superconformal manifold is 
 $\dim{\cal M}_{SC}= N^2(2N^2-5)$.  The case $T[SU(2)]$ was discussed already separately.

 %%%%%%%%%%
 %%%%%%%%%%%

 \subsection{The holographic perspective}\label{holo}
 
          In this last part  we discuss  the relation to string theory and sketch some directions for future work.

  Recall that   the $T_\rho^{\hat\rho}[SU(N)]$ theories are  holographically   
  dual to type IIB string theory in 
  the supersymmetric backgrounds  of refs.\,\cite{Assel:2011xz,Assel:2012cj}.   
  The geometry has a AdS$_4\times$S$^2_H\times$S$^2_C$ fiber over a basis which is  the infinite 
   strip $\Sigma$. The $SO(2,3)\times SO(3)_H\times SO(3)_C$ symmetry of the SCFT is realized as
   isometries of the fiber.  The solution features   singularities on the upper  (lower) boundary of the
   strip which correspond to  D5-brane sources wrapping S$^2_H$ (NS5-brane sources wrapping S$^2_C$). 
  These two-spheres are trivial in homology, yet  the 
  branes  are  stabilized by non-zero worldvolume fluxes that counterbalance
  the negative tensile stress  \cite{Bachas:2000ik}.  
  
    There is a total of  $k+1$ NS5-branes and $\hat k+1$ D5-branes. Their position along the boundary
    of the strip is a function of their linking number,    which increases  from left to right for D5-branes  and
    decrease for NS5-branes \cite{Assel:2011xz}. Branes with the same linking number  overlap giving non-abelian
    flavour symmetries. 
  The linking number of a fivebrane can be equivalently defined as
  \begin{itemize}

\item  the D3-brane charge dissolved in the fivebrane ; 

\item the worldvolume flux on the wrapped  two-sphere;

\item the node of the corresponding quiver, for instance   the $\hat\imath$th D5-brane  provides a 
fundamental hypermultiplet  at the $\hat l_{\hat\imath} = i$ 
 node of the electric quiver (see \autoref{sec:A}).
  
 \end{itemize}
  The $R$-symmetry spins $J^H, J^C$ are the angular momenta of a state on the two spheres.
  Given the above dictionary, can we understand the results of this paper from the string-theory side?
 
 \smallskip
 
 Consider first the Higgs-branch chiral ring which consists of the highest weights of all $B_1[0]^{(j^H, 0)}$
 multiplets. When decomposed in terms of conformal primaries these multiplets read \cite{Cordova:2016emh}
\bee
 B_1[0]^{(j^H, 0)}_{j^H}\ =  
  [0]^{(j^H, 0)}_{j^H} \,\oplus\, [0]^{(j^H-1, 1)}_{j^H+1} \,\oplus\, [1]^{(j^H-1, 0)}_{j^H+1}\,\oplus\,
  {\rm fermions}_{j^H+{1\over 2}}\ . 
 \eeq 
 Note  that the  top component includes a vector boson  with scaling   dimension
 $\Delta = j^H+1$. This is a massless gauge boson in AdS$_4$ for  $j^H=1$ (`conserved current' multiplet)
 and a massive gauge boson for $j^H>1$. As explained in ref.\,\cite{Bachas:2017wva},
 both massless and massive  vector bosons are states of  fundamental open strings on the D5-branes.
 Their vertex operators   include a scalar
 wavefunction on  S$^2_H$ with angular momentum $J^H = j^H-1$.   Consider  such an open string
 stretching between two D5-branes with linking numbers $\ell$ and $\ell^\prime$.  Since these latter  are
 magnetic-monopole fields on S$^2_H$,  the open string couples to a net field $(\ell - \ell^\prime)$.
 Its  wavefunction is therefore given by the well-known monopole spherical harmonics with\footnote{This
 celebrated result goes back to the early days of quantum mechanics  \cite{Tamm:1931dda}. We  have used it
  implicitly   when expressing determinants as  $q$-Pochammer symbols. For an  amusing real-time manifestation of the effect  see
 \cite{Bachas:2009ve}\,.}
 \bee\label{57a}
 j^H - 1 = {1\over 2} \vert \ell - \ell^\prime \vert + \mathbb{N}
 \eeq
 where $\mathbb{N}$ are  the natural numbers.  Recalling that the linking numbers also designate the
 nodes of the electric quiver, we understand   
  why the Higgs-branch chiral ring includes   strings of  minimal
 length $\vert \ell - \ell^\prime \vert + 2$ 
  transforming in the bi-fundamental of 
  $U(M_\ell) \times U(M_{\ell^\prime})$ for all $k \geq \ell^\prime > \ell >  0$ \cite{Bachas:2017wva}. The 
  bifundamental strings of length 3 and 4 in eq.\,\eqref{414} are of this kind.  
 \smallskip
 
      The $\Delta = 2$ chiral ring  also includes  strings of length 4  in the adjoint   
      of $U(M_j)$ for all $k>  j  > 1$,  see \eqref{414}. The  corresponding  open-string  vector bosons  
      on the $i$th stack of 
       D5-branes  do not feel a monopole field  
      ($\ell = \ell^\prime =i$) but have  angular momentum
      $j^H-1 = 1$. Notice however that   these length-4 operators  
     are missing
      at  the two ends of the quiver, i.e.\  for $i  = 1$ and  for $i= k$. 
   How can one  understand  this from the string theory side?

  A plausible explanation comes from a  well-known  
   effect   dubbed `stringy exclusion principle' 
  in ref.\,\cite{Maldacena:1998bw}. The relevant setup features $K$ NS5-branes  and a set of probe D-branes
  ending on them.  The worldsheet theory in this background has an affine algebra  
  $\mathfrak{su}(2)_K$\footnote{The bosonic subalgebra has level $K-2$ and an  extra factor 2
  is added by fermions.} and D-branes (Cardy states) labelled by the set of dominant affine weights
  $\lambda = 0, 1, \cdots, K-1$. The ground states of open strings 
  stretched between two such D-branes  have   weights $\nu$  in the interval  $$\bigl[\,\vert \lambda - \lambda^\prime\vert,
   {\rm min}(\lambda + \lambda^\prime,  2K - \lambda - \lambda^\prime)\,\bigr]$$ and in  steps of two
  \cite{DiFrancesco:1997nk}.   Translating   $\lambda = \ell-1$ (see \cite{Bachas:2009ve}),   $\mu = 2(j^H-1)$ and  $K= k+1$
  (the total number of NS5-branes) gives in replacement of \eqref{57a}
  \bee\label{58a}
   j^H - 1\ =\  {1\over 2} \vert \ell - \ell^\prime \vert  \,, \ {1\over 2} \vert \ell - \ell^\prime \vert +1\,, \ \cdots \  ,
   {\rm min}\Bigl({\ell + \ell^\prime\over 2} -1\,,  k - {\ell + \ell^\prime\over 2} \Bigr)\ . 
  \eeq
 The intuitive understanding of the  upper cutoff is that a string 
 cannot remain in its ground state if  its  angular momentum exceeds the size of the
  sphere. 
   It follows   that for  $\ell=\ell^\prime = 1$ or $k$,  only the $j^H=1$ states survive, in agreement with our
  findings for the single-string part of the Higgs-branch chiral ring.
  
     To be sure  this is just an argument, not a proof,  because
     in the solutions dual to $T_\rho^{\hat\rho}[SU(N)]$ the 3-sphere threaded by the NS5-brane flux
     is highly deformed by  the strong back reaction of the D-branes.
     The perfect match  with the field theory side suggests, however,  that 
     the detailed  geometry does not matter when it comes to  the above stringy effect.\footnote{The match between field theory and (multi-string) symmetric products of single-string states counted using the stringy exclusion principle seems to continue holding to higher orders until the occurrence of low gauge-group rank exclusion effects discussed below.}
      
 %%%%%%%%%%%%     
      
       The  superconformal index  brings to light  other exclusion effects 
       associated to abelian gauge nodes of the electric and magnetic quivers,   as summarized in
       eqs.\,\eqref{5c} and \eqref{5cc}. For higher elements of the chiral ring, these effects are
       more generally related to the finite ranks of   the gauge groups. This  is  a ubiquitous phenomenon in 
       holography -- McGreevy et al  coined the name `giant graviton' for it 
       in  the prototypical AdS$_5\times$S$^5$ example \cite{McGreevy:2000cw}. 
        We did not manage  to  find   a simple 
           explanation for  giant-graviton exclusions in the  problem at hand.
           Part of the difficulty is that, as opposed  to  the 5-brane linking numbers, the  
            gauge group ranks  have   a less  direct  meaning  on the gravitational  side of the  AdS/CFT
            correspondence.\footnote{Note  that two theories with the 
             same flavour symmetry, i.e.\ the same disposition of five-branes,
             can have very different  gauge-group
           ranks. 
           This feature (called `fine print' in ref.\,\cite{Bachas:2017wva}) is best  illustrated by sQCD$_3$ with a
           fixed number of flavours, $N_f$,   but an arbitrary number of colors
           $N_c \in (2, [N_f/2]-1)$, see \autoref{exs}. }
  \smallskip    
   %%%%%%%%%%%    
         
            We conclude our  discussion of the AdS side with a remark about  gauged  ${\cal N}=4$ supergravity. 
     In addition to the graviton,  this  has  $n$ vector multiplets and global 
       $SL(2)\times SO(6,n)$ symmetry,   part of which may be  gauged. Insisting that the gauged theory
       have  a supersymmetric  AdS$_4$ vacuum
    restricts the form of the gauge   group to
 be  $G_H\times G_C\times G_0\subset SO(6,n)$, where the (generally) 
 non-compact  
 $G_H$ and $G_C$  contain the  $R$-symmetries
 $SO(3)_H$ and  $SO(3)_C$ \cite{Louis:2014gxa}.
 
   The vector bosons of   spontaneously-broken gauge symmetries
 belong to $B$-type multiplets with $(j^H, j^C)= (2,0)$ or $(0,2)$. These can describe 
  the length-4 
 marginal operators in  the Higgs-branch  or Coulomb-branch chiral rings.  
  As noted on the other hand  in ref.\,\cite{Bachas:2017wva},   there is no room for 
   elementary  $ (1,1) $ multiplets in ${\cal N}=4$ supergravity, because such
 multiplets  have  extra 
   spin-${3\over 2}$ fields.
       But we have  just seen   that linear-quiver theories have no single-string $(1,1)$
   operators, so the above limitation does not apply. All mixed marginal deformations  correspond to 
     double-string
           operators that
             can be described effectively by modifying the  boundary conditions
          of   their single-string constituents \cite{Witten:2001ua,Berkooz:2002ug}. 
          Note that boundary conditions  change  the quantization, not the solution.  So
          
     \bigskip
           
     \fbox{%
    \parbox{14.4cm}{%
         Gauged 
           ${\cal N}=4$ supergravity has  the  necessary   ingredients to describe the complete  moduli space
           of  the $T_\rho^{\hat\rho}[SU(N)]$ theories, provided one considers both classical and quantization moduli.
   }%
}         
     
            \bigskip
            
          \noindent          This quells, at least for linear quivers,  the concern  raised in  \cite{Bachas:2017wva} 
           that reduction of string theory
          to gauged 4d supergravity  may truncate  away  part of the moduli space. 
          As   pointed out,  however, recently  by one of us \cite{Bachas:2019rfq} such
          quantization moduli of  gauged supergravity can be singular in the full-fledged ten-dimensional string theory.

 %%%%%%%%%%  
 %%%%%%%%%
 
 \subsection{One last  comment}
 
      We end  with a remark about  the Hilbert series of $T_\rho^{\hat\rho}[SU(N)]$ theories. 
  As we explained in  \autoref{sec3},  the full chiral ring  consists of  the highest-weights of all 
  $B$-type  multiplets in the theory  with arbitrary  $(j^H, j^C)$. 
  The relevant and  marginal operators can be identified unambiguously in  the index, 
  as can the entire  Higgs-branch and Coulomb-branch subrings.  But
   general mixed elements (with $j^H, j^C\geq 1$ not both~$1$) cannot be extracted unambiguously.  
   A calculation that does not rely on the superconformal  index 
    would 
   therefore be of great interest.

   A  natural conjecture  
   for the   full Hilbert series \cite{Carta:2016fjb}
    is  that it is the coordinate ring of the union of all branches~$B_\sigma$ (for the $T_\rho^{\hat\rho}$ theory, $\sigma$ ranges over partitions between $\rho$ and $\hat\rho^T$),
\begin{equation}
 \operatorname
  {HS}\biggl(\bigcup_\sigma B_\sigma\biggm|x_+,x_-\biggr)
  = \sum_\Lambda (-1)^{|\Lambda|-1}   \operatorname{HS}\biggl(\bigcap_{\sigma\in\Lambda} B_\sigma\biggm|x_+,x_-\biggr)
\end{equation}
where $\Lambda$ runs over all non-empty subsets of the branches of the theory.
In words, the full Hilbert series would be the sum of Hilbert series of every branch, minus corrections due to pairwise intersections and so on.
It can be  checked  that this conjecture is consistent with the Higgs branch and Coulomb branch limits ($q^{1/4}t^{\mp 1/2}\to 0$ with $q^{1/4}t^{\pm 1/2}$ fixed).
One can also compare the   number of $B_1[0]^{(1,1)}$ multiplets suggested by this conjecture
to the number extracted from the index.
In the limited set of examples that we checked (with zero or one mixed branch)  we found an exact match.
Finding a better way  to confirm or falsify this  conjecture is an interesting problem.
 \vskip  6mm

 {\bf Acknowledgements:}   We thank  Benjamin Assel, Santiago Cabrera, 
 Ken Intriligator, Noppadol Mekareeya,  Shlomo Razamat  and Alberto Zaffaroni
  for useful  discussions and  for   correspondence. 
  We are particularly indebted to  Amihay Hanany for his many  patient explanations
 of   aspects of  3d   quiver gauge theories.  CB gratefully   acknowledges the hospitality of 
  the String Theory group at  Imperial College  where part of this work was done.

\appendix

\section{Index and  plethystic exponentials}\label{appB}
\setcounter{equation}{0}

The twisted partition function on $S^2\times S^1$ of  the  $T^{\hat \rho}_{{\rho}}$ theory
is  given by a multiple sum over monopole charges  and a multiple  integral over
 gauge fugacities,  see e.g.~\cite{Willett:2016adv}
 \begin{equation}\label{B1eq}
  \begin{aligned}
   \mathcal{Z}_{S^2\times S^1}
    & = \prod_{j=1}^k\Biggl[ \frac{1}{N_j!} \sum_{m_j\in\ZZ^{N_j}} \int \prod_{\alpha=1}^{N_{j}}\frac{dz_{j,\alpha}}{2\pi i z_{j,\alpha}} \Biggr] \Biggl\{
    \prod_{j=1}^k \prod_{\alpha=1}^{N_j} w_j^{m_{j,\alpha}}
     Z_{j,\alpha}^{\text{vec},\text{diag}} \\
    & \qquad \prod_{j=1}^k \prod_{\alpha\neq\beta}^{N_j} Z_{j,\alpha,\beta}^{\text{vec},\text{off-diag}}
    \prod_{j=1}^k \prod_{p=1}^{M_j} \prod_{\alpha=1}^{N_j} Z_{j,p,\alpha}^{\text{fund},\text{hyp}}
    \prod_{j=1}^{k-1} \prod_{\alpha=1}^{N_j}\prod_{\beta=1}^{N_{j+1}}    Z_{j,\alpha,j+1,\beta}^{\text{bifund},\text{hyp}}
    \Biggr\}
  \end{aligned}
\end{equation}
where 
\begin{subequations} \label{Zs}
\begin{align}
  Z_{j,\alpha}^{\text{vec},\text{diag}}
  & =  
    \frac{(q^{\frac{1}{2}}t;q)_{\infty}}{(q^{\frac{1}{2}}t^{-1};q)_{\infty}}
  \\
  Z_{j,\alpha,\beta}^{\text{vec},\text{off-diag}}
  & =
    \begin{aligned}[t]
      &   (q^{\frac{1}{2}}t^{-1})^{- \frac{1}{2}\abs{ m_{j,\alpha}-m_{j,\beta}}}
      (1-q^{\frac{1}{2}\abs{ m_{j,\alpha}-m_{j,\beta}}} 
      z_{j,\beta}z_{j,\alpha}^{-1}   )   \\
      & \qquad  \times \frac{(tq^{\frac{1}{2}+\abs{ m_{j,\alpha}-m_{j,\beta}}}
       z_{j,\beta}z_{j,\alpha}^{-1} \,;\, q)_{\infty}}{(t^{-1}q^{\frac{1}{2}+\abs{ m_{j,\alpha}-m_{j,\beta}}} z_{j,\beta}z_{j,\alpha}^{-1} \,;\, q)_{\infty}}
    \end{aligned}
  \\
  Z_{j,p,\alpha}^{\text{fund},\text{hyp}}
  & =    (q^{\frac{1}{2}}t^{-1})^{\frac{1}{2}\abs{ m_{j,\alpha}}}
  \frac{(t^{-\frac{1}{2}}q^{\frac{3}{4}+\frac{1}{2}\abs{ m_{j,\alpha}}}
   z_{j,\alpha}^{\pm 1}\mu_{j,p}^{\mp 1}
   \,;\, q)_{\infty}}{(t^{\frac{1}{2}}q^{\frac{1}{4}+\frac{1}{2}\abs{ m_{j,\alpha}}}
    z_{j,\alpha}^{\mp 1}\mu_{j,p}^{\pm 1}  \, ;\, q)_{\infty}}
  \\
  Z_{j,\alpha,j+1,\beta}^{\text{bifund},\text{hyp}}
  & =     (q^{\frac{1}{2}}t^{-1})^{\frac{1}{2}\abs{m_{j,\alpha}-m_{j+1,\beta}}}
  \frac{(t^{-\frac{1}{2}}q^{\frac{3}{4}+\frac{1}{2}\abs{ m_{j,\alpha}-m_{j+1,\beta}}}
  z_{j,\alpha}^{\pm 1}z_{j+1,\beta}^{\mp 1}  \, ;\, q)_{\infty}}{(t^{\frac{1}{2}}q^{\frac{1}{4}+\frac{1}{2}\abs{ m_{j,\alpha}-m_{j+1,\beta}}}z_{j,\alpha}^{\mp 1}
  z_{j+1,\beta}^{\pm 1}     \,;\, q)_{\infty}}\ . 
\end{align}
\end{subequations}
The  expressions  \eqref{Zs} are the one-loop determinants of the   ${\cal N}=4$  multiplets of   $T^{\hat \rho}_{{\rho}}$, namely 
 the Cartan and charged  vector multiplets, and the 
 fundamental and
bifundamental  hypermultiplets.  The variables  
$q$, $t$ are the  fugacities defined in eq.\,\eqref{sci},  $z_{j, \alpha}$
(where $\alpha$ labels the Cartan generators) are
  the   $S^1$ holonomies  of the   $U(N_j)$ gauge field 
 and $m_{j, \alpha}$ its   2-sphere  fluxes, {\it viz}.  the monopole charges of the corresponding local operator in $\mathbb{R}^3$ .  
 Furthermore   $\mu_{j,p}$ are  flavor fugacities,   $w_j$ is a fugacity for the topological $U(1)$ symmetry
whose conserved current is 
$\Tr{\star F_{(j)}}$, while
 the   $q$-Pochhammer  symbols 
  $(a; q)_\infty$   are defined by 
\bee
(a; q)_\infty  = \prod_{n=0}^\infty (1 -aq^n) 
\qquad
{\rm and}\quad 
(\dots a^{\pm 1} b^{\mp 1};q)_\infty=(\dots ab^{-1};q)_\infty(\dots a^{-1}b;q)_\infty\ . 
\eeq
Compared to   the  expressions   in ref.\,\cite{Willett:2016adv}  we have here replaced  
the background flux  coupling to any given multiplet 
 by its absolute value.
This is allowed because of some cancellation between factors in the numerator and denominator Pochhammer symbols, as explicited for instance around~\eqref{absm}.
The theory is  also free from  parity anomalies, so that the 
overall signs are unambiguous.\footnote{There exists a  subtle
sign $(-)^{e\cdot m}$ related to the  change of  spin of dyonic states with charges
$(e,m)$. The $T^{\hat \rho}_{{\rho}}$ theory has no Chern-Simons terms, so the flux ground states
have no electric charge, $e$,  and contribute with plus signs to the index. For  excited states
in the flux background  this sign 
 can be absorbed in the fugacities
$z_{j,\alpha}$; it  is  in the end   irrelevant since  the $z_{j,\alpha}$
 integrations  project    to gauge-invariant states. }

\smallskip

At leading order in the $q$ expansion, 
     the contribution of each monopole sector ${\bf m} = \{m_{j, \alpha}\}$ to the 
     superconformal index    
     is   $ (q^{\frac{1}{2}}t^{-1})^{\Delta({\bf m})}$,   where
\begin{equation}
 2  \Delta({\bf m})  = \sum_{j=1}^k \sum_{\alpha,\beta=1}^{N_j} -\abs{m_{j,\alpha}-m_{j,\beta}}
  + \sum_{j=1}^k M_j \sum_{\alpha=1}^{N_j} \abs{m_{j,\alpha}}
  + \sum_{j=1}^{k-1} \sum_{\alpha=1}^{N_j} \sum_{\beta=1}^{N_{j+1}} \abs{m_{j,\alpha}-m_{j+1,\beta}} .
\end{equation}
The sphere Casimir energy $\Delta({\bf m})$ is the scaling dimension [and the $SO(3)_C$ spin]  of the 
corresponding monopole operator \cite{Borokhov:2002ib,Borokhov:2003yu}. It is known that in  ${\cal N}=4$ theories  monopole-operator
dimensions are one-loop exact,  
and that they are strictly positive
for  good  linear quivers
 \cite{Gaiotto:2008ak}.   
The index  \eqref{B1eq} admits  therefore an  expansion   in positive powers of $q$. 
\smallskip
  
 It is useful to rewrite the superconformal index   in terms of the plethystic exponential 
 (PE) which is defined,  for any function
 $f(v_1, v_2, \cdots )$ of arbitrarily many variables that vanishes at~$0$,  by the following expression
 \bee
 {\rm PE}(f) = \exp\left( \sum_{n=1}^\infty {1\over n} f(v_1^n, v_2^n, \cdots) \right)\ . 
 \eeq
The reader can verify  the following simple  identities:
\bee
        {\rm PE} (f + g) =  {\rm PE} (f) \,{\rm PE}(g)\, ,  \quad 
 {\rm PE}(- v) =  (1-v)\, , \quad
 {\rm PE}\bigl((a,q)_\infty \bigr) =  {\rm PE}\bigl( -{a\over 1-q}\bigr)\ . 
\eeq
Using these identities one  can bring the index to  the following form
\begin{equation} 
  \begin{aligned}
    \mathcal{Z}_{S^2\times S^1}
    & = \! \prod_{j=1}^k\Biggl[ \frac{1}{N_j!} \! \sum_{m_j\in\ZZ^{N_j}} \! \int \! \prod_{\alpha=1}^{N_{j}}\frac{dz_{j,\alpha}}{2\pi i z_{j,\alpha}} \Biggr] \Biggl\{ \!   (q^{\frac{1}{2}}t^{-1})^{\Delta({\bf m})} \!
     \prod_{j=1}^k      \biggl[   \prod_{\alpha}^{N_j} w_j^{ m_{j,\alpha}} \!\!
     \prod_{\alpha\neq\beta}^{N_j} (1-q^{{1\over 2} \abs{ m_{j,\alpha}-m_{j,\beta}}}z_{j,\beta}z_{j,\alpha}^{-1})
     \biggr]
    \\
    & \qquad \times {\rm PE} \Biggl( \sum_{j=1}^k \sum_{\alpha,\beta=1}^{N_j} \frac{q^{1\over 2} (t^{-1}-t)}{1-q}\,
    q^{ \abs{ m_{j,\alpha}-m_{j,\beta}}}z_{j,\beta}z_{j,\alpha}^{-1}
    \\
    & \qquad\qquad\qquad  + \frac{(q^{1\over 2} t)^{1\over 2} (1 - q^{1\over 2} t^{-1})}{1-q}
    \sum_{j=1}^k \sum_{p=1}^{M_j} \sum_{\alpha=1}^{N_j}
   q^{{1\over 2} \abs{m_{j,\alpha}}} \sum_{\pm}z_{j,\alpha}^{\mp 1}\mu_{j,p}^{\pm 1}
    \\
    & \qquad\qquad\qquad + \frac{(q^{1\over 2} t)^{1\over 2} (1 - q^{1\over 2} t^{-1})}{1-q}
    \sum_{j=1}^{k-1} \sum_{\alpha=1}^{N_j}\sum_{\beta=1}^{N_{j+1}}
    q^{{1\over 2} \abs{m_{j,\alpha}-m_{j+1,\beta}}} \sum_{\pm}z_{j,\alpha}^{\mp 1}z_{j+1,\beta}^{\pm 1}
    \Biggr)
    \Biggr\}\ . 
  \end{aligned}
\end{equation}
 This is equation \eqref{fullindexoflinearquiverQpQm} in the main text.  Notice that 
 after extracting  some  factors, the contributions
 of vector, fundamental and bifundamental multiplets add up    in the  argument
 of the plethystic exponential, as they would in  the standard  exponential function.

   The usefulness of the above rewriting can be  illustrated with a simple example, that of a
   free   hypermultiplet 
   whose superconformal index is 
\bee
  \mathcal{Z}_{S^2\times S^1}^{\text{free}\, \text{hyp}}
  =   
  \frac{(t^{-\frac{1}{2}}q^{\frac{3}{4} }
 \mu^{\mp 1}
   \,;\, q)_{\infty}}{(t^{\frac{1}{2}}q^{\frac{1}{4} }
   \mu^{\pm 1}  \, ;\, q)_{\infty}}\
   =\    {\rm PE}\left(  \frac{(q^{1\over 4}t ^{1\over 2} - q^{3\over 4}t ^{-{1\over 2}})}{1-q}(\mu + \mu^{-1})
   \right)\ .
\eeq
One recognizes in the PE exponent the contributions of the  charge-conjugate ${\cal N}=2$ chiral multiplets, 
each contributing to the index with  one scalar  
($\Delta = J_3^H = {1\over 2}$ and $J_3 = J_3^C = 0$) and one  fermionic state  (with 
$\Delta = 1$,   $ J_3^H = 0$ and $J_3 = J_3^C = {1\over 2}$). As for the factor of $(1-q)$, this  sums up   
descendant states obtained by the action of the  derivative that raises both $\Delta$ and $J_3$ by one unit. 
Multiparticle states (created by products of fields) are taken care of by
the plethystic exponential,  the information in them is in this simple case redundant.

  Of course in 
 interacting theories  supersymmetric multiparticle states  may be  null,  due  for example  to $F$-term conditions. 
The plethystic exponent must in this case be interpreted appropriately, as we discuss  in the main text. 

%%%%%%%%%%%%

\section{Combinatorics of linear quivers}\label{sec:A}

  We collect here formulae for  the different parametrizations of the discrete data  of  the good linear quivers, 
 and we establish two  lemmas  used in \autoref{44a} of the main text.
  
The mirror-symmetric parametrization of the quiver is in terms of
 two partitions $(\rho,\hat\rho)$ with an equal total number $N$ of boxes,  if these partitions are 
 viewed as Young diagrams.
We label entries of these partitions and of their transposes as
\begin{equation}
  \begin{aligned}
    \rho & = (l_1,l_2,\dots,l_{k+1}) && \text{with} \quad l_1\geq l_2\geq\dots\geq l_{k+1}\geq 1, \\
    \rho^T & = (l^T_1, l^T_2,\dots, l^T_{l_1}) && \text{with} \quad l^T_1\geq l^T_2\geq\dots\geq l^T_{l_1} \geq 1, \\
    \hat\rho & = (\hat l_1,\hat l_2,\dots,\hat l_{\hat k+1}) && \text{with} \quad \hat l_1\geq\hat l_2\geq\dots\geq\hat l_{\hat k+1}\geq 1, \\
    \hat\rho^T & = (\hat l^T_1,\hat l^T_2,\dots,\hat l^T_{\hat l_1}) && \text{with} \quad \hat l^T_1\geq\hat l^T_2\geq\dots\geq\hat l^T_{\hat l_1}\geq 1,
  \end{aligned}
\end{equation}
where we used the fact that the number of rows of   $\rho^T$ is given by the longest row $l_1$ of~$\rho$,
we denoted the number of rows of~$\rho$ as $l^T_1=k+1\geq 2$, and likewise for hatted quantities.
To simplify formulae, the sequences $(l_j)$, $(l^T_{\hat\jmath})$, $(\hat l_{\hat\jmath})$, $(\hat l^T_j)$ are extended with zeros when $j$ or $\hat\jmath$ goes beyond the last entry.
The total number of boxes is $\sum_j l_j = \sum_{\hat\jmath}l^T_{\hat\jmath} = \sum_{\hat\jmath}\hat l_{\hat\jmath} = \sum_j \hat l^T_j  = N $.
\smallskip

    In the string-theory embedding   $\rho$ and $\hat\rho$ describe how  $N$ D3-branes end on two sets of fivebranes: 
     on $k+1$ NS5-branes to  the left and on $\hat k+1$ D5-branes to  the right.\footnote{In some of the earlier
     literature, especially ref.\,\cite{Assel:2011xz}, $\rho$ designated the  partition of D3-branes 
     among  D5-branes and 
      $\hat\rho$    the  partition among  NS5-branes. Our flipped convention here is chosen so  as to 
      remove all  hats from  the data of the  electric quiver, defined as the theory whose manifest 
       flavour symmetry  is realized on D5-branes. Note in particular that in the parametrization \eqref{B2a}
       the number of same-length rows of $\hat\rho$ runs over $j= 1, \cdots , k$.
     } 
      The number of D3-branes
     ending on the $j$th NS5-brane (or   its  linking number which is invariant under
     brane moves) is $l_j$, and likewise for the hatted quantities.
     A useful alternative parametrization
      of these
     partitions is in terms of the numbers  of their same-length rows
   \begin{equation}\label{B2a}
  \begin{aligned}
    \rho & = ( \underbrace{\hat k+ \cdots +\hat k}_{\hat M_{\hat k}} \,+\, \cdots  +\, 
     \underbrace{\ell + \cdots +\ell }_{\hat M_\ell} + \cdots +\,  \underbrace{1+ \cdots +1}_{\hat M_1}
     ) \   , 
    \\
       \hat\rho & = ( \underbrace{ k+ \cdots +  k}_{  M_k} +\, \cdots +\, 
     \underbrace{\ell + \cdots +\ell }_{  M_\ell}  + \cdots +\,  \underbrace{1+ \cdots +1}_{  M_1}
     ) \ , 
  \end{aligned}
\end{equation}
where we used  the good property $\hat\rho^T > \rho$ which implies that $l_1\leq \hat k$ and 
$\hat l_1 \leq k$. 
Note that here some of the   $M_\ell$ and $\hat M_\ell$  may vanish, when there are no 
fundamental hypermultiplets
at  the corresponding gauge-group nodes.  Note also that   
  the label $\xi$ for  groups of balanced nodes in  \autoref{44a} 
  runs over stacks of NS5-branes with $\hat M_\ell >1$, i.e.\ over nodes in the
magnetic quiver with non-abelian   flavour groups.

The electric and magnetic gauge groups are $\prod_{j=1}^k U(N_j)$ and $\prod_{\hat\jmath=1}^{\hat k} U(\hat N_{\hat\jmath})$:
\begin{equation}
  \begin{aligned}
    \begin{tikzpicture}[inner sep=1pt,anchor=base]
      \node(N1)[circle,draw] at (0,0) {$N_1$};
      \node(N2)[circle,draw] at (1,0) {$N_2$};
      \node(dots) at (2,0) {${\,\cdots\,}$};
      \node(Nk)[circle,draw] at (3,0) {$N_k$};
      \node(M1)[draw] at (0,-1) {$M_1$};
      \node(M2)[draw] at (1,-1) {$M_2$};
      \node(Mk)[draw] at (3,-1) {$M_k$};
      \draw(M1)--(N1)--(N2)--(M2);
      \draw(N2)--(dots)--(Nk)--(Mk);
      \node[anchor=base west] at (4,0) {$N_j=N_{j-1}+\hat l^T_j- l_j$ with $N_0=0$,};
      \node[anchor=base west] at (4,-1) {$M_j=\hat l^T_j-\hat l^T_{j+1}$,};
    \end{tikzpicture}
    \\
    \begin{tikzpicture}[inner sep=1pt,anchor=base]
      \node(N1)[circle,draw] at (0,0) {$\hat N_1$};
      \node(N2)[circle,draw] at (1,0) {$\hat N_2$};
      \node(dots) at (2,0) {${\,\cdots\,}$};
      \node(Nk)[circle,draw] at (3,0) {$\hat N_{\hat k}$};
      \node(M1)[draw] at (0,-1) {$\hat M_1$};
      \node(M2)[draw] at (1,-1) {$\hat M_2$};
      \node(Mk)[draw] at (3,-1) {$\hat M_{\hat k}$};
      \draw(M1)--(N1)--(N2)--(M2);
      \draw(N2)--(dots)--(Nk)--(Mk);
      \node[anchor=base west] at (4,0) {$\hat N_{\hat\jmath}=\hat N_{\hat\jmath-1}+l^T_{\hat\jmath}-\hat l_{\hat\jmath}$ with $\hat N_0=0$,};
      \node[anchor=base west] at (4,-1) {$\hat M_{\hat\jmath}=l^T_{\hat\jmath}-l^T_{\hat\jmath+1}$.};
    \end{tikzpicture}
  \end{aligned}
\end{equation}
The 3d $\mathcal{N}=4$ flavour group is $G\times\hat G$ with $G=\bigl(\prod_{j=1}^k U(M_j)\bigr)/U(1)$ and $\hat G=\bigl(\prod_{\hat\jmath=1}^k U(\hat M_{\hat\jmath})\bigr)/U(1)$.
By definition of transposition, $\hat l^T_j$ counts rows of $\hat\rho$ with at least $j$~boxes, so the following difference counts rows of~$\hat\rho$ with exactly $j$~boxes:
\begin{equation}
  \begin{aligned}
    M_j & = \hat l^T_j-\hat l^T_{j+1} = \#\{\hat\imath\mid\hat l_{\hat\imath}=j\} , \\
    \text{and likewise} \quad \hat M_{\hat\jmath} & = l^T_{\hat\jmath}- l^T_{\hat\jmath+1} = \#\{i\mid l_i=\hat\jmath\} .
  \end{aligned}
\end{equation}
We restrict our attention to \emph{good theories}: those with all $N_j\geq 1$ and $\hat N_{\hat\jmath}\geq 1$.  In particular,
$1\leq \hat N_1 = l^T_1 - \hat l_1 = k+1 - \hat l_1$, namely $\hat l_1\leq k$.  Likewise, $l_1\leq\hat k$.

An important quantity is the \emph{balance} of a node.  It takes a very simple form in terms of the partitions:
\begin{equation}
  \begin{aligned}
    & N_{j+1}+N_{j-1}+M_j-2N_j
    = (N_{j+1} - N_j) - (N_j - N_{j-1}) + M_j \\
    & \qquad = \hat l^T_{j+1} - l_{j+1} - \hat l^T_j + l_j + \hat l^T_j-\hat l^T_{j+1}
    = l_j - l_{j+1} .
  \end{aligned}
\end{equation}
The node~$j$ is \emph{balanced} if this vanishes.
An interval $\mathcal{B}\subseteq[1,k]$ of balanced nodes of the electric quiver thus corresponds to $|\mathcal{B}|+1$ consecutive~$l_j$ equal to the same value~$\hat\jmath$.
In terms of the transposed partition, this means $\hat M_{\hat\jmath} = l^T_{\hat\jmath} - l^T_{\hat\jmath+1} = |\mathcal{B}|+1$.
This is the well-known $SU(|\mathcal{B}|+1)$ flavour symmetry enhancement.

\begin{lemma}\label{lem:BalancedAb}
  If the electric quiver has a balanced abelian node $N_j=1$ then one of the following
  possibilities   holds: 
  \begin{enumerate}
  \item $1<j<k$ and $M_j=0$ and $N_{j-1}=N_{j+1}=1$;
  \item $j=k=1$ and $M_1=2$ (this is the $T[SU(2)]$ theory);
  \item $j=1$ and $M_1=1$ and $N_2=1$;
  \item $j=k$ and $M_k=1$ and $N_{k-1}=1$;
  \item $j=1$ and $M_1=0$ and $N_2=2$;
  \item $j=k$ and $M_k=0$ and $N_{k-1}=2$.
  \end{enumerate}
  The corresponding magnetic gauge group (at position $\hat\jmath\coloneqq l_j$) is abelian in the first four cases and non-abelian in the last two.
\end{lemma}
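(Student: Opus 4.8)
The plan is to prove the two assertions separately: the enumeration of the six possibilities, and then the (non-)abelianness of the magnetic node at $\hat\jmath=l_j$. For the enumeration I would substitute $N_j=1$ into the balance relation $N_{j-1}+N_{j+1}+M_j-2N_j=0$, with the boundary convention $N_0=N_{k+1}=0$, to get $N_{j-1}+N_{j+1}+M_j=2$. Since the quiver is good, every interior neighbour obeys $N_i\ge1$ and every $M_i\ge0$, so the non-negative integer solutions can be read off according to the position of $j$. For $1<j<k$ both neighbours are $\ge1$, forcing $N_{j-1}=N_{j+1}=1$ and $M_j=0$ (possibility~1); for $j=1<k$ one has $N_0=0$ and $N_2\ge1$, so $N_2+M_1=2$ leaves only $(N_2,M_1)=(1,1)$ or $(2,0)$ (possibilities~3 and~5); the mirror case $j=k>1$ is identical under $j\mapsto k+1-j$ (possibilities~4 and~6); and $j=k=1$ gives $M_1=2$, i.e.\ $T[SU(2)]$ (possibility~2). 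This step is routine.

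For the magnetic rank I would translate the local electric data into the partition language of \autoref{sec:A} via $N_i-N_{i-1}=\hat l^T_i-l_i$ together with the balance identity $l_j=l_{j+1}=\hat\jmath$, and then evaluate the magnetic rank through the box-counting formula
\[
  \hat N_{\hat\jmath} = \sum_{m=1}^{\hat\jmath} l^T_m - \sum_{m=1}^{\hat\jmath}\hat l_m
  = \sum_{m}\min(l_m,\hat\jmath) - \sum_{m}\min(\hat l^T_m,\hat\jmath),
\]
where the two capped sums count boxes of $\rho$ in its first $\hat\jmath$ columns and boxes of $\hat\rho$ in its first $\hat\jmath$ rows. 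In the interior possibility~1 the equalities $N_{j-1}=N_j=N_{j+1}=1$ give $\hat l^T_j=\hat l^T_{j+1}=l_j=l_{j+1}=\hat\jmath$, so the two capped sequences agree term by term through index $j+1$ (each contributing $\hat\jmath$) and those truncated sums cancel; since $l_m,\hat l^T_m\le\hat\jmath$ for $m\ge j+2$ the remaining tails telescope,
\[
  \hat N_{\hat\jmath} = \sum_{m\ge j+2}\bigl(l_m-\hat l^T_m\bigr)
  = -\sum_{m\ge j+2}(N_m-N_{m-1}) = N_{j+1} = 1,
\]
which is abelian, as claimed.

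For the boundary possibilities I would invoke the good-theory bounds $\hat l_1\le k$ and $l_1\le\hat k$ of \autoref{sec:A}. When $j=1$ the relation $N_1=\hat l^T_1-l_1=1$ gives $\hat l^T_1=l_1+1=\hat k+1$, hence $\hat\jmath=l_1=\hat k$; combined with $l_1\le\hat k$ the telescoping above collapses to $\hat N_{\hat k}=\hat l_{\hat k+1}$, the smallest part of $\hat\rho$. Possibility~3 ($M_1=1$) forces exactly one unit part of $\hat\rho$, so this smallest part is $1$ (abelian), while possibility~5 ($M_1=0$) forces every part of $\hat\rho$ to be $\ge2$ (non-abelian). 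Symmetrically, for $j=k$ the balance together with $N_{k+1}=0$ and $\hat l_1\le k$ gives $\hat\jmath=l_k=1$ and $\hat N_1=(k+1)-\hat l_1$; possibility~4 pins $\hat l_1=k$ (abelian) and possibility~6 forces $\hat l_1\le k-1$ (non-abelian). The $T[SU(2)]$ case is a one-line check, $\hat N_1=l^T_1-\hat l_1=2-1=1$.

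The enumeration is elementary; the substantive part is the magnetic computation. The main obstacle I anticipate is bookkeeping in the partition combinatorics, in particular correctly handling the boundary terms in the capped sums and making sure that goodness of \emph{both} the electric and the magnetic quiver is used---it is precisely the inequalities $\hat l_1\le k$ and $l_1\le\hat k$ that truncate the telescoping sums and that separate the abelian magnetic nodes (possibilities~1--4) from the non-abelian ones (possibilities~5--6).
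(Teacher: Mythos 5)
Your proof is correct and follows essentially the same route as the paper's: the same enumeration from the balance condition $N_{j-1}+N_{j+1}+M_j=2$, and the same partition bookkeeping for the magnetic rank, arriving at the identical key quantities ($\hat N_{\hat\jmath}=N_{j+1}$ in the interior case, $\hat l_{\hat k+1}$ for $j=1$, and $k+1-\hat l_1$ for $j=k$, with the goodness bounds $l_1\leq\hat k$, $\hat l_1\leq k$ doing the same work). The only difference is presentational: where the paper argues pictorially via the shared rectangle of $\rho$ and $\hat\rho^T$, you telescope the capped sums $\sum_m\min(l_m,\hat\jmath)-\sum_m\min(\hat l^T_m,\hat\jmath)$ algebraically, which is the same computation in formulas.
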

\begin{proof}
  The balance condition reads $N_{j-1}+N_{j+1}+M_j=2N_j=2$.
  This implies that $(N_{j-1},M_j,N_{j-1})$ are $(1,0,1)$, $(0,2,0)$, $(0,1,1)$, $(1,1,0)$, $(0,0,2)$ or $(2,0,0)$.
  For each case where $N_{j-1}=0$ we deduce $j=1$ because all nodes in $[1,k]$ have non-zero rank.
  Similarly, $N_{j+1}=0$ means $j=k$.
  We then work out the rank of the magnetic gauge group in each case.
\smallskip

  Case 1. From $N_j-N_{j-1}=0$ and $M_j=0$ and $N_{j+1}-N_j=0$ we see that
  $l_j=\hat l^T_j=\hat l^T_{j+1}=l_{j+1}$ (we denote this $\hat\jmath$).
  Thus the intersection of $\rho$ (drawn in blue below) and $\hat\rho^T$ (drawn in red and dashed) includes a $(j+1)\times\hat\jmath$ rectangle (drawn as thick black lines),
  and the two partitions share a boundary.
  \begin{center}
    \begin{tikzpicture}[scale=.5]
      \draw[line width=2pt,shift={(.025,.025)}] (0,0) rectangle (5,-5);
      \draw[blue,thick] (0,0) -- (10,0) -- (10,-1) -- (8,-1) -- (8,-3) node [anchor=south west] {$\rho$}
      -- (5,-3) -- (5,-5) -- (3,-5) -- (3,-7) -- (0,-7) -- cycle;
      \draw[red,densely dashed,thick,shift={(.05,.05)}] (0,0) -- (12,0) -- (12,-1) node [anchor=south west] {$\hat\rho^T$}
      -- (8,-1) -- (8,-2) -- (7,-2) -- (7,-3) -- (5,-3) -- (5,-6) -- (0,-6) -- cycle;
      \draw (0,-3) -- (5,-3);
      \draw (0,-4) -- (5,-4);
      \node[anchor=east] at (-.2,-.5) {$1$};
      \node[anchor=east] at (-.2,-3.5) {$j$};
      \node[anchor=east] at (-.2,-4.5) {$j+1$};
      \draw[->] (5.4,-3.5) -- (7,-3.5) node [anchor=west] {row length $l_j=\hat l^T_j=\hat\jmath$};
      \draw[->] (5.4,-4.5) -- (7,-4.5) node [anchor=west] {row length $l_{j+1}=\hat l^T_{j+1}=\hat\jmath$};
    \end{tikzpicture}
  \end{center}  \vskip 2mm
  By definition, $\hat N_{\hat\jmath}$ counts boxes in rows $1$ through~$\hat\jmath$ of $\rho^T$, minus those in the same rows of~$\hat\rho$.
  Removing the common rectangle, this compares the numbers of boxes of the two partitions below the rectangle.
  Since the total numbers of boxes in both partitions are the same,
  it is equivalent to comparing boxes above the lower edge of the rectangle, hence $\hat N_{\hat\jmath} = N_{j+1} = 1$.
\smallskip

  Case 2. $T[SU(2)]$ is self-mirror and abelian.
\smallskip

  Cases 3.\ and 5. $N_1=1$ gives $\hat l^T_1= l_1+1$.
  Thus, $\hat N_{ l_1}$ counts boxes of $\rho^T$ (this partition has $ l_1$ rows)
  minus all boxes of $\hat\rho$ except its last ($\hat l^T_1$-th) row.
  Since $|\rho^T|=|\hat\rho|$, we conclude that the rank we care about is $\hat N_{ l_1}=\hat l_{\hat l^T_1}$.
  This in turn is equal to the number of entries of $\hat\rho^T$ equal to $\hat l^T_1$.
  Note now that $\hat l^T_1=\hat l^T_2+M_1$.
  If $M_1>0$ (case 3) then $\hat l^T_2<\hat l^T_1$ so $\hat N_{ l_1}=1$.
  If $M_1=0$ (case 5) then $\hat l^T_2=\hat l^T_1$ so $\hat N_{ l_1}\geq 2$.

\smallskip

  Cases 4.\ and 6. $N_k=1$ (and $N_{k+1}=0$) gives $ l_{k+1}=\hat l^T_{k+1}+1$, while balance gives $ l_k= l_{k+1}$.
  On general grounds, $1\leq\hat N_1= l^T_1-\hat l_1=k+1-\hat l_1$ so the number of rows~$\hat l_1$ of $\hat\rho^T$ is $\leq k$,
  hence in particular $\hat l^T_{k+1}=0$.
  From all this we deduce that $ l_k= l_{k+1}=1$ and that we want to know $\hat N_1$.
  Now use $\hat l^T_k=\hat l^T_{k+1}+M_k$.
  If $M_k=0$ then this vanishes so $\hat\rho^T$ has at most $k-1$ rows, so $\hat N_1=k+1-\hat l_1\geq 2$.
  If $M_k>0$ then $\hat\rho^T$ has $k$ rows, namely $\hat N_1=k+1-\hat l_1=1$.
\end{proof}

In the main text we introduce the number $\Delta n_{\rm mixed}$, given in~\eqref{430a},
that counts redundancies between $F$-term relations in the mixed term $x_+^2x_-^2$.

\begin{lemma}\label{lem:Deltanmixed}
  The quantity
  $\Delta n_{\rm mixed} = \delta_{N_1=1} + \delta_{N_k=1} - \sum_{j=1}^k \delta_{N_j=1} + \sum_{\hat\jmath\mid \hat N_{\hat\jmath}=1}  \hat M_{\hat\jmath}$
  is invariant under mirror symmetry.  Furthermore, $\Delta n_{\rm mixed}=3$ for abelian theories and $\Delta n_{\rm mixed}\leq 2$ otherwise.
\end{lemma}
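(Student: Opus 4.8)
The plan is to prove the two assertions—mirror invariance and the bound—separately, both by reducing everything to the combinatorics of the partitions $(\rho,\hat\rho)$ of \autoref{sec:A}. Mirror symmetry acts by $\rho\leftrightarrow\hat\rho$, hence by $N_j\leftrightarrow\hat N_{\hat\jmath}$, $M_j\leftrightarrow\hat M_{\hat\jmath}$ and $k\leftrightarrow\hat k$, so it is enough to recast $\Delta n_{\rm mixed}$ in a form invariant under this exchange. The main object is the electric \emph{profile} $N_j=\sum_{i=1}^j(\hat l^T_i-l_i)$, which obeys $N_0=N_{k+1}=0$ and $N_j\ge 1$ for $1\le j\le k$ by goodness. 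First I would rewrite the $\delta$-terms in \eqref{430a}: for $k\ge 2$ one has $\delta_{N_1=1}+\delta_{N_k=1}-\sum_j\delta_{N_j=1}=-P$, where $P=\#\{j\mid N_j=1,\ j\neq 1,\ j\neq k\}$ counts interior abelian electric nodes, so that $\Delta n_{\rm mixed}=S-P$ with $S=\sum_{\hat\jmath\mid \hat N_{\hat\jmath}=1}\hat M_{\hat\jmath}$ (the cases $k=1$, i.e.\ sQCD and $T[SU(2)]$, are isolated and done by hand). Mirror invariance then amounts to the single identity
\[
\sum_{j\mid N_j=1}\bigl(M_j+\delta_{j\neq 1}\delta_{j\neq k}\bigr)=\sum_{\hat\jmath\mid \hat N_{\hat\jmath}=1}\bigl(\hat M_{\hat\jmath}+\delta_{\hat\jmath\neq 1}\delta_{\hat\jmath\neq \hat k}\bigr).
\]

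To establish this identity I would group the abelian electric nodes into maximal consecutive runs $[a,b]$. On such a run the profile is constant equal to $1$, which forces $\hat l^T_j=l_j$ for $a<j\le b$, and the telescoping $M_j=\hat l^T_j-\hat l^T_{j+1}$ gives $\sum_{j=a}^b M_j=\hat l^T_a-\hat l^T_{b+1}$. The left-hand side then becomes boundary data of the overlaid Young diagrams of $\rho$ and $\hat\rho^T$, read off at the rows where $\hat\rho^T$ leads $\rho$ by exactly one box. Transposing the overlay exchanges $(\rho,\hat\rho^T)\leftrightarrow(\rho^T,\hat\rho)$ and turns the horizontal (electric) partial-sum reading into the vertical (magnetic) one; the claim is that the resulting count of ``lead-by-one'' features is transpose-invariant, which is exactly the right-hand side. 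The subtlety is bookkeeping at the two ends, where $N_{a-1}$ or $N_{b+1}$ jumps from $0$, so I would treat the end runs ($a=1$ or $b=k$) separately using $\hat l^T_1=l_1+1$ and $\hat l^T_{k+1}=0$.

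For the exact value on abelian theories I would first show, again from the profile, that ``all $N_j=1$'' is a rigid staircase condition on $(\rho,\hat\rho^T)$ that is transpose-symmetric, so the mirror is abelian as well (all $\hat N_{\hat\jmath}=1$). Then $P=k-2$, $\hat P=\hat k-2$, and $R:=\sum_j(M_j+\delta_{j\neq 1}\delta_{j\neq k})=\sum_j M_j+(k-2)$, while the total D5-brane count gives the clean sum $\sum_{j=1}^k M_j=\hat l^T_1=\hat k+1$. Assembling $\Delta n_{\rm mixed}=R-P-\hat P$ yields $(\hat k+1)+(k-2)-(k-2)-(\hat k-2)=3$, and the $k=1$ case (sQED) gives the same value directly from $\Delta n_{\rm mixed}=S+\delta_{N_1=1}=2+1$.

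The hard part is the upper bound $\Delta n_{\rm mixed}\le 2$ in the non-abelian case. Writing $\Delta n_{\rm mixed}=\sum_{j\mid N_j=1}M_j-\hat P$, I would argue that the positive contributions localize at the two ends of the quiver: the flavors on interior abelian electric runs are compensated by the count $\hat P$ of interior abelian magnetic nodes—the mirror manifestation of the $F$-term cancellations already used for $\Delta n_{\rm H}$—so interior runs contribute non-positively, while each end contributes at most one unit. Here I expect \autoref{lem:BalancedAb} and its mirror to control the terminal configurations, distinguishing cases 2,3,4 (ends carrying flavor) from cases 5,6 (ends capped by a rank-$\ge 2$ node), and showing that a third unit can appear only when both profiles are identically $1$, i.e.\ in the abelian case. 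The genuine obstacle is that abelian nodes need not be balanced, so \autoref{lem:BalancedAb} does not apply verbatim to every abelian node; the unbalanced end and interior nodes must instead be handled directly through the profile jumps $\hat l^T_a=l_a+1-N_{a-1}$ and $\hat l^T_{b+1}=l_{b+1}+N_{b+1}-1$, and it is the precise cancellation of interior contributions against $\hat P$ that I expect to require the most care.
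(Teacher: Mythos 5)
Your reductions are correct as far as they go: rewriting $\Delta n_{\rm mixed}=S-P$ with $P$ the number of interior abelian electric nodes, the equivalence of mirror invariance with the identity $\sum_{j\mid N_j=1}(M_j+\delta_{j\neq 1}\delta_{j\neq k})=\sum_{\hat\jmath\mid \hat N_{\hat\jmath}=1}(\hat M_{\hat\jmath}+\delta_{\hat\jmath\neq 1}\delta_{\hat\jmath\neq \hat k})$, the telescoping $\sum_{j=a}^b M_j=\hat l^T_a-\hat l^T_{b+1}$ on abelian runs, and the abelian evaluation yielding $3$ all check out, and the overlay-transposition mechanism you invoke is exactly the paper's (the magnetic overlay $(\hat\rho,\rho^T)$ is the transpose of the electric one $(\rho,\hat\rho^T)$). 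But the step that carries all the weight is asserted, not proved: you never construct the transpose-invariant object that both sides of your identity compute, and ``the count of lead-by-one features is transpose-invariant'' is precisely what has to be established. The paper does this by labelling each connected component of $\partial\rho\cap\partial\hat\rho^T$ with the box-count difference above it, then showing (i) $S-\delta_{\hat N_1=1}-\delta_{\hat N_{\hat k}=1}$ equals the total \emph{length} of label-$1$ shared vertical edges, (ii) $\sum_j\delta_{N_j=1}$ equals the \emph{number} of label-$1$ shared horizontal edges, \emph{provided} one counts a zero-length horizontal ``edge'' at every integer point along a shared vertical edge, and (iii) within each connected component the number of horizontal edges exceeds the total vertical length by exactly one. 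This Euler-type count is what converts the asymmetric bookkeeping (a length on one side, a cardinality on the other) into the manifestly self-mirror formula
\[
\Delta n_{\rm mixed} = \delta_{N_1=1} + \delta_{N_k=1} + \delta_{\hat N_1=1} + \delta_{\hat N_{\hat k}=1} - \#\{\text{shared components with label }1\},
\]
and the zero-length-edge convention in (ii) is exactly the delicate bookkeeping at your run endpoints that your sketch leaves open.

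The second gap is the bound $\Delta n_{\rm mixed}\leq 2$ in the non-abelian case, which you yourself flag as unresolved; and your intended crutch fails for the reason you identify: the abelian nodes entering $\Delta n_{\rm mixed}$ need not be balanced, so \autoref{lem:BalancedAb} controls neither the interior runs nor the ends, and the ``cancellation of interior contributions against $\hat P$'' is left as a hope rather than an argument. Once the symmetric formula above is in hand, the bound is a two-line contraposition: $\Delta n_{\rm mixed}\leq 4-1$ since equality of all four end-deltas forces at least one label-$1$ component, and $\Delta n_{\rm mixed}=3$ forces the component containing the $N_1$ horizontal edge to coincide with the one containing the $N_k$ edge, whence all $N_j=1$ and the theory is abelian. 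Your plan of ``localizing positive contributions at the two ends'' would in effect have to reprove exactly this connectivity statement, so without the component-labelling construction both halves of the lemma remain unproven in your proposal.
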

\begin{proof}
  An important ingredient in the previous proof was an intersection point between the boundaries $\partial\rho$ of $\rho$ and $\partial\hat\rho^T$ of~$\hat\rho^T$ (we do not include the two coordinate axes in these boundaries).  Denote by $(j,\hat\jmath)$ the position of such an intersection point, where $(0,0)$ is the upper left corner, so that the partitions share a $j\times\hat\jmath$ rectangle but neither contains the box at positions $(j+1,\hat\jmath+1)$.  Then $N_j$, which counts boxes of $\hat\rho^T$ above the intersection minus those of~$\rho$, is equal to $\hat N_{\hat\jmath}$, which counts the same difference for boxes to the left of the intersection.

  Let us define the \emph{label} (an integer $\geq 1$) of each connected component of the $\partial\rho\cap\partial\hat\rho^T$ intersection of boundaries as this difference in the number of boxes above this connected component.
  Let us now understand $\Delta n_{\rm mixed}$ in terms of the components with label~$1$.

  Consider first the (non-zero) terms in $\sum_{\hat\jmath\mid \hat N_{\hat\jmath}=1} \hat M_{\hat\jmath}$, namely consider $\hat\jmath$ with $N_{\hat\jmath}=1$ and $\hat M_{\hat\jmath}\geq 1$.
  There are a few cases.
  \begin{itemize}
  \item $1<\hat\jmath<\hat k$: then $\hat N_{\hat\jmath\pm 1}\geq 1=N_{\hat\jmath}$ so $\hat l_{\hat\jmath+1}\leq l^T_{\hat\jmath+1}\leq l^T_{\hat\jmath}\leq\hat l_{\hat\jmath}$, where the middle inequality comes from $\hat M_{\hat\jmath}=l^T_{\hat\jmath}-l^T_{\hat\jmath+1}\geq 0$.  This corresponds to a vertical edge of length~$\hat M_{\hat\jmath}$ from $(l^T_{\hat\jmath+1},\hat\jmath)$ to $(l^T_{\hat\jmath},\hat\jmath)$, shared by $\rho$ and $\hat\rho^T$, and with label $N_{\hat\jmath}=1$.
  \item $1=\hat\jmath<\hat k$: now $\hat l_{\hat\jmath}=l^T_{\hat\jmath}-1$ so the shared vertical edge has length $\hat M_{\hat\jmath}-1$ from $(l^T_{\hat\jmath+1},\hat\jmath)$ to $(\hat l_{\hat\jmath},\hat\jmath)$.
  \item $1<\hat\jmath=\hat k$: now $\hat l_{\hat\jmath+1}=l^T_{\hat\jmath+1}+1$ so the shared edge has length $\hat M_{\hat\jmath}-1$ from $(\hat l_{\hat j+1},\hat\jmath)$ to $(l^T_{\hat j},\hat\jmath)$.
  \item $1=\hat\jmath=\hat k$: one checks the shared vertical edge has length $\hat M_{\hat\jmath}-2$ (non-negative because of the balance condition).
  \end{itemize}
  Conversely, every shared vertical edge from $(i_1,\hat\jmath)$ to $(i_2,\hat\jmath)$ with label~$1$ shows up in this list: indeed, the label means $N_{\hat\jmath}=1$ and the edge implies that $l^T_{\hat\jmath+1}\leq i_1<i_2\leq l^T_{\hat\jmath}$ are separated by at least $\hat M_{\hat\jmath}\geq i_2-i_1\geq 1$.
  Altogether, the total length of all vertical edges with label~$1$ shared by $\rho$ and~$\hat\rho^T$ is
  \begin{equation}
    \sum_{\hat\jmath\mid \hat N_{\hat\jmath}=1} (\hat M_{\hat\jmath}-\delta_{\hat\jmath=1}-\delta_{\hat\jmath=\hat k})
    = \biggl( \sum_{\hat\jmath\mid \hat N_{\hat\jmath}=1} \hat M_{\hat\jmath} \biggr)
    - \delta_{\hat N_1=1} - \delta_{\hat N_{\hat k}=1} .
  \end{equation}

  Next consider the other sum in $\Delta n_{\rm mixed}$, namely $\sum_{j=1}^k \delta_{N_j=1}$.  Separating four cases as above we find that this sum counts the \emph{number} (rather than length) of shared \emph{horizontal} ``edges'' with label~$1$.  To be more precise, we include among these ``edges'' one zero-length edge (intersection point) for each integer point along the shared vertical edges, as we depict in the following figure (shared horizontal edges are in black bold, and circled numbers are the labels).
  \begin{center}
    \begin{tikzpicture}[scale=.5]
      \draw[blue,thick] (0,0) -- (11,0) -- (11,-1) -- (9,-1) -- (9,-3) node [anchor=north west] {$\rho$}
      -- (6,-3) -- (6,-5) -- (4,-5) -- (4,-8) -- (0,-8) -- cycle;
      \draw[red,densely dashed,thick,shift={(.05,.05)}] (0,0) -- (13,0) -- (13,-1) node [anchor=south west] {$\hat\rho^T$}
      -- (9,-1) -- (9,-2) -- (8,-2) -- (8,-3) -- (6,-3) -- (6,-6) -- (5,-6) -- (5,-7) -- (0,-7) -- cycle;
      \draw[ultra thick] (9,-1) -- (11,-1) node [pos=.5,above=2pt,circle,thin,draw,inner sep=0pt]{\scriptsize $2$};
      \draw[ultra thick] (8.8,-2) -- (9.2,-2) node [right=2pt,circle,thin,draw,inner sep=0pt]{\scriptsize $2$};
      \draw[ultra thick] (6,-3) -- (8,-3) node [pos=.5,above=2pt,circle,thin,draw,inner sep=0pt]{\scriptsize $1$};
      \draw[ultra thick] (5.8,-4) -- (6.2,-4) node [right=2pt,circle,thin,draw,inner sep=0pt]{\scriptsize $1$};
      \draw[ultra thick] (5.8,-5) -- (6.2,-5) node [right=2pt,circle,thin,draw,inner sep=0pt]{\scriptsize $1$};
      \draw[ultra thick] (3.8,-7) -- (4.2,-7) node [pos=.5,below right=2pt,circle,thin,draw,inner sep=0pt]{\scriptsize $4$};
    \end{tikzpicture}
  \end{center}

  We are ready to put together these observations.  In each connected component of the shared boundary of $\rho$ and $\hat\rho^T$ with label~$1$, the total length of vertical edges is one less than the number of horizontal edges (including zero-length, as discussed above).  Thus,
  \begin{equation}
    \Delta n_{\rm mixed} = \delta_{N_1=1} + \delta_{N_k=1} + \delta_{\hat N_1=1} + \delta_{\hat N_{\hat k}=1}
    - \#\{\text{shared components with label~$1$}\},
  \end{equation}
  which is manifestly self-mirror.

  The end of the proof is straightforward: $\Delta n_{\rm mixed}$ is at most $4-1$, with equality if and only if $N_1=N_k=\hat N_1=\hat N_{\hat k}=1$ and the shared boundary has a single connected component with label~$1$.  In particular the horizontal edges corresponding to $N_1=1$ and to $N_k=1$ must belong to the same component so all $N_j=1$: the theory is abelian.
\end{proof}

\section{\texorpdfstring{$T[SU(2)]$}{T[SU(2)]}  index  as holomorphic blocks}\label{app:C}

As is well-known from the study of 3d ${\cal N}=2$ theories~\cite{Pasquetti:2011fj,Dimofte:2011py,Beem:2012mb}
 (see also \cite{Razamat:2014pta}  for the ${\cal N}=4$ case), superconformal indices (and various other partition functions) are bilinear combinations of basic building blocks, refered to as (anti)holomorphic blocks, which are partition functions on $\mathcal{D}^{2}\times S^{1}$.
We  work out here this factorization for $T[SU(2)]$, and  then verify that the resulting closed-form expression~\eqref{eq:tsu2closed} reproduces our expansion of the superconformal index at order $O(q)$.
The structure generalizes but we did not find it useful in concrete calculations, 
because for generic theories  this  factorized form  contains  a large number of terms.

The expression for the full superconformal index of $T[SU(2)]$ reads:
\begin{equation}\label{eq:tsu21}
  \mathcal{Z}^{T[SU(2)]}_{S^{2}\times S^{1}} = \frac{(q^{\frac{1}{2}}t;q)_{\infty}}{(q^{\frac{1}{2}}t^{-1};q)_{\infty}} \sum_{ m\in\mathbb{Z}}\, (q^{\frac{1}{2}}t^{-1})^{|m|} w^m \oint_{S^1}\frac{dz}{2\pi i z} \prod_{p=1}^2\prod_{\pm} \frac{(t^{-\frac{1}{2}}q^{\frac{3}{4}+\frac{|m|}{2}}z^{\pm}\mu_p^{\mp};q)_{\infty}}{(t^{\frac{1}{2}}q^{\frac{1}{4}+\frac{|m|}{2}}z^{\mp}\mu_p^{\pm};q)_{\infty}}\ , 
\end{equation}
where $m$ is the unique monopole charge, and $z$ runs over the  unit circle in the classical Coulomb branch~$\mathbb{C}$.
The integrand has poles at\footnote{At first sight there is also  a pole at $z=0$, but in fact the $q$-Pochhammer factors tend to zero there.}   
\begin{equation}
  \begin{aligned}
    z = z_{s,j} \coloneqq \mu_s t^{\frac{1}{2}}q^{\frac{1}{4}+\frac{|m|}{2}+j}
    \quad \text{and} \quad
    z = \mu_s \bigl( t^{\frac{1}{2}}q^{\frac{1}{4}+\frac{|m|}{2}+j} \bigr)^{-1}
    \quad \text{for $s=1,2$ and integer $j\geq 0$.}
  \end{aligned}
\end{equation}
We calculate the index as an expansion in powers of~$q$, hence $|q|<1$, with $|t|=|\mu_s|=1$.
The poles that we named $z_{s,j}$ thus lie inside the $|z|=1$ contour and the other poles outside.
\smallskip

To warm up, compute the contribution to $\mathcal{Z}^{T[SU(2)]}_{S^{2}\times S^{1}}$ from the pole at $z_{s,0}$ for $m=0$:
\begin{equation}
  C_s \coloneqq \prod_{p\neq s} \frac{(q \mu_s \mu_p^{-1};q)_{\infty}}{(\mu_s^{-1}\mu_p;q)_{\infty}}
  \frac{(q^{\frac{1}{2}}t^{-1}\mu_s^{-1}\mu_p;q)_{\infty}}{(q^{\frac{1}{2}}t\mu_s\mu_p^{-1};q)_{\infty}} .
\end{equation}
Before moving on to other residues, we note that the identity
\begin{equation}\label{absm}
  \bigl(iq^{\frac{1}{8}}a^{\frac{1}{2}}\bigr)^{|m|}\frac{(q^{\frac{3}{4}+\frac{|m|}{2}}a;q)_{\infty}}{(q^{\frac{1}{4}+\frac{|m|}{2}}a^{-1};q)_{\infty}}
  = \bigl(iq^{\frac{1}{8}}a^{\frac{1}{2}}\bigr)^m\frac{(q^{\frac{3}{4}+\frac{m}{2}}a;q)_{\infty}}{(q^{\frac{1}{4}+\frac{m}{2}}a^{-1};q)_{\infty}}
\end{equation}
allows us to replace $|m|\to m$ throughout~\eqref{eq:tsu21}.
The resulting expression involves both positive and negative powers of~$q$, which would make our lives harder  
 if we wanted to expand in powers of~$q$, but leads to nicer residues.
We compute the contribution from the $z_{s,j}$ pole for any~$m$:
\begin{equation}
  \begin{aligned}
    & \frac{(q^{\frac{1}{2}}t;q)_{\infty}}{(q^{\frac{1}{2}}t^{-1};q)_{\infty}} (q^{\frac{1}{2}}t^{-1}w)^m
    \prod_{p=1}^2 \frac{(q^{1+j+\frac{|m|+m}{2}}\mu_s\mu_p^{-1};q)_{\infty}(q^{\frac{1}{2}-j-\frac{|m|-m}{2}}t^{-1}\mu_s^{-1}\mu_p;q)_{\infty}}
    {\bigl((q^{-j-\frac{|m|-m}{2}} \mu_s^{-1}\mu_p;q)_{\infty}\bigr)'(q^{\frac{1}{2}+j+\frac{|m|+m}{2}} t \mu_s \mu_p^{-1};q)_{\infty}} \\
    & = C_s\, (q^{\frac{1}{2}}t^{-1}w)^{k_+-k_-}
    \prod_{p=1}^2 \frac{(q^{\frac{1}{2}} t \mu_s \mu_p^{-1};q)_{k_+} (q^{\frac{1}{2}-k_-}t^{-1}\mu_s^{-1}\mu_p;q)_{k_-}}
    {(q\mu_s\mu_p^{-1};q)_{k_+}(q^{-k_-} \mu_s^{-1}\mu_p;q)_{k_-}}
  \end{aligned}
\end{equation}
where the prime in the first line denotes the removal of the vanishing factor in the $q$-Pochhammer symbol for $p=s$,
and we then used finite $q$-Pochhammer $(a;q)_k=(a;q)_\infty/(aq^k;q)_\infty$ and
changed variables to $k_{\pm}\coloneqq j+\frac{|m|\pm m}{2}\geq 0$.
Altogether 
\begin{equation}
  \mathcal{Z}^{T[SU(2)]}_{S^{2}\times S^{1}}
  = \sum_{s=1}^2 C_s \prod_{\pm} \biggl(
  \sum_{k_{\pm}\geq 0} (q^{\frac{1}{2}}t^{-1}w^{\pm 1})^{k_{\pm}} \prod_{p=1}^2
  \frac{(q^{\frac{1}{2}} t \mu_s \mu_p^{-1};q)_{k_{\pm}}}{(q\mu_s\mu_p^{-1};q)_{k_{\pm}}} \biggr).
\end{equation}
We recognize here the $q$-hypergeometric series
\begin{equation}
  {}_2\phi_1\biggl[\begin{matrix}a,b\\c\end{matrix}\biggm|q,z\biggr]
  \coloneqq\sum_{k\geq 0}\frac{(a;q)_k(b;q)_k}{(q;q)_k(c;q)_k}z^k .
\end{equation}
In terms of $\mu \coloneqq\mu_1\mu_2^{-1}$ and $\hat\mu \coloneqq w$
\begin{equation}\label{eq:tsu2closed}
  \mathcal{Z}^{T[SU(2)]}_{S^{2}\times S^{1}}
  = \frac{(q \mu ;q)_{\infty}}{(\mu^{-1};q)_{\infty}}
  \frac{(q^{\frac{1}{2}}t^{-1}\mu^{-1};q)_{\infty}}{(q^{\frac{1}{2}}t\mu ;q)_{\infty}}
  \prod_{\pm} \Biggl( {}_2\phi_1\biggl[\begin{matrix}
    q^{\frac{1}{2}} t ,    q^{\frac{1}{2}} t \mu  \\
    q\mu 
  \end{matrix}\biggm|q,q^{\frac{1}{2}}t^{-1}\hat\mu^{\pm 1}\biggr] \Biggr)
  + \bigl(\mu\leftrightarrow\mu^{-1}\bigr) .
\end{equation}
This is the factorized form of the index.
It is possible to show, using complicated identities obeyed by $q$-hypergeometric series,
that this result is mirror-symmetric

\smallskip

To compare with the main text we   expand in powers of~$q$ and organize 
the series in terms of supercharacters so as to  extract the representation content:
\begin{equation}\label{C9}
  \begin{aligned}
    \mathcal{Z}^{T[SU(2)]}_{S^{2}\times S^{1}}
    & = 1 + q^{\frac{1}{2}} t \chi_3(\mu) + q^{\frac{1}{2}} t^{-1} \chi_3(\hat\mu) + q t^2 \chi_5(\mu) 
    + q t^{-2} \chi_5(\hat\mu) - q \bigl(1+\chi_3(\mu)+\chi_3(\hat\mu) \bigr)   + O(q^{\frac{3}{2}}) \\
    & = 1 + \chi_3(\mu)\,{\cal I}_{(1,0)} + \chi_3(\hat\mu)\,{\cal I}_{(0,1)} +
    \boxed{ \chi_5(\mu)\,{\cal I}_{(2,0)} } - {\cal I}_{(1,1)} + 
    \boxed{\chi_5(\hat\mu)\,{\cal I}_{(0,2)} }  + O(q^{\frac{3}{2}})
  \end{aligned}
\end{equation}
\hskip 0.3mm

\noindent  where $\chi_3(\mu)\coloneqq \mu+1+\mu^{-1}$ and $\chi_5(\mu)\coloneqq \mu^2+\mu+1+\mu^{-1}+\mu^{-2}$ are characters of $SU(2)$, and we used the short-hand notation for the superconformal indices
${\cal I}_{(J^H, J^C)}\coloneqq{\cal I}_{B_{1}[0]^{(J^H, J^C)}}(q,t)$. This agrees with eq.\,\eqref{432}
of \autoref{sec:4}. 
\smallskip

As explained in the paper  the following BPS multiplets  can be  unambiguously  identified:
\begin{itemize}
\item $1$: the identity;
\item $\chi_3(\mu){\cal I}_{(1,0)}$: the $SU(2)$ electric-flavour currents;
\item $\chi_3(\hat\mu){\cal I}_{(0,1)}$: the $S{\widehat U(}2)$ magnetic-flavour currents;
\item $\chi_5(\mu){\cal I}_{(2,0)}$: products of two electric currents;
\item $\chi_5(\hat\mu){\cal I}_{(0,2)}$: products of two magnetic currents;
\item $-{\cal I}_{(1,1)}$: the energy-momentum tensor multiplet $A_2[0]^{(0,0)}$.
\end{itemize}
The bottom component $\tilde{Q}^{\bar p}Q^r$ of an electric-current multiplet
 is the product of an antifundamental and a fundamental chiral scalar   (the $F$-term condition imposes $\tilde{Q}^1Q^1+\tilde{Q}^2Q^2=0$).
Since the gauge group is abelian,   $\tilde{Q}^{\bar p}Q^r$ has rank~$1$ hence zero determinant.
This removes one  of  the six products of two electric currents,
 thus explaining why there are only five such products in \eqref{C9}.
\hskip 4mm

Altogether we see that the $T[SU(2)]$ theory has no mixed marginal (or relevant)  chiral operators. 
All exactly-marginal deformations   are purely electric or purely magnetic superpotentials.
After imposing the D-term conditions the supeconformal manifold has dimension $10-7 = 3$.

%%%%%%%%%%%%%%%%%

\end{document}